\newcommand{\diag}{\mbox{diag}}
\newcommand{\cov}{\mbox{cov}}
\newcommand{\cor}{\mbox{cor}}
\newcommand{\var}{\mathrm{var}}
\newcommand{\tr}{\mbox{tr}}
\newtheorem{theorem}{Theorem}
\newtheorem{lem}{Lemma}
\newtheorem{corollary}{Corollary}
\newtheorem{proposition}{Proposition}
\title{Locally Optimal Design for A/B Tests in the Presence of Covariates and Network Dependence}
\author[1]{Qiong Zhang\thanks{qiongz@clemson.edu}}
\author[2]{Lulu Kang\thanks{lkang2@iit.edu}}
\affil[1]{School of Mathematical and Statistical Sciences, Clemson University}
\affil[2]{Department of Applied Mathematics, Illinois Institute of Technology}
\date{}  
\begin{document}
\maketitle

{\singlespacing
\begin{abstract}
A/B test, a simple type of controlled experiment, refers to the statistical procedure of experimenting to compare two treatments applied to test subjects. 
For example, many IT companies frequently conduct A/B tests on their users who are connected and form social networks. 
Often, the users' responses could be related to the network connection. 
In this paper, we assume that the users, or the test subjects of the experiments, are connected on an undirected network, and the responses of two connected users are correlated. 
We include the treatment assignment, covariate features, and network connection in a conditional autoregressive model. 
Based on this model, we propose a design criterion that measures the variance of the estimated treatment effect and allocate the treatment settings to the test subjects by minimizing the criterion. 
Since the design criterion depends on an unknown network correlation parameter, we adopt the locally optimal design method and develop a hybrid optimization approach to obtain the optimal design. 
Through synthetic and real social network examples, we demonstrate the value of including network dependence in designing A/B experiments and validate that the proposed locally optimal design is robust to the choices of parameters. 
\end{abstract}

{\bf Keywords: } A/B test; Conditional autoregressive model; Controlled experiments; Covariates; Optimal design. 
}


\section{Introduction}\label{sec:intro}

A/B or A/B/n test, a simple type of controlled experiment, refers to the procedure of comparing the outcomes of two or more treatment settings from a finite number of test subjects. In the literature, controlled experiments
have been widely used in agricultural, clinical trials, engineering and science studies, marketing research, etc \citep{atkinson2001one}.
Due to the advent of Internet technologies, large-scale A/B test has been commonly used by technology companies such as Amazon, Facebook, LinkedIn, Netflix, etc., to compare different versions of algorithms, web designs, and other online products and services.
For example, \cite{nandy2020b} showed a case study on the LinkedIn newsfeed, which is a content recommender system with hundreds of millions of users. 
A recommender system is referred to as the infrastructure that provides a personalized recommendation on products or services based on users' personal information or past behaviors \citep{kohavi2020trustworthy}.
The accuracy of the recommender algorithm is crucial to the quality and/or profit of these companies. 
A practical problem is to decide if an innovative update should be made to the algorithm in use. 
Therefore, an A/B test (i.e., ``A'' refers to the updated algorithm and ``B'' the current one) is used to make a comparison of the two and make the decision. 
To make a robust comparison of the two algorithms, the experiment should last for a certain period to make sure that users can receive enough exposure to the updated recommender system. 
The outcome of each user can be the total time spent on the recommended products/service and the click-through rate to the recommended products/service. 

In its simplest form, the experimenter wants to compare the outcomes of two different treatments, labeled by A and B.
A completely randomized design is commonly used, in which the treatment setting is randomly assigned to different test subjects.
The randomization leads to unbiased estimates of certain estimands, typically, the average treatment (or causal) effect \citep{rubin2005causal}, under minimum assumptions.
However, there is still room for improvement in the efficiency of the A/B test procedure when certain practical challenges are involved.
Besides the treatment setting, many other variables can affect a user's outcome, including the covariates information and social network connection of the user. 
Covariates, such as users' demographic, educational, financial information, are usually available to the experimenter and can significantly contribute to the behaviors and opinions of users. 
In the aforementioned scenarios, the experimenter also possesses the network connections of the users.
In Section \ref{sec:real}, we simulate an A/B experiment for the music recommender system based on the real dataset collected from the music streaming service Deezer. 
The data contains the friendship network of users and their covariates information regarding preferences to different music genres, which should be highly influential to the music recommender system. 
Intuitively, the outcomes of two connected users might be correlated to some degree.
This intuition is reflected in the model assumption of the outcome regarding the network structure, and referred to as the network-correlated outcomes in \cite{basse2018model}.
In Section \ref{sec:literature} and \ref{sec:optd}, we explain in details the assumption on the effects of the network to a user's outcome.

The rest of the paper is arranged as follows. 
In Section \ref{sec:literature} we highlight some relevant existing works and point out the differences between the proposed method and the existing ones. 
Section \ref{sec:optd} introduces the regression model including both the covariates and the correlation between users due to network connections. 
Based on this model, in Section \ref{sec:BayOptD}, we propose a locally optimal design method in which the network correlation parameter is set to be the mean of its prior distribution.
In Section \ref{sec:hybrid}, a hybrid approach is proposed to solve the optimization problem to obtain the optimal design.
Through numerical experiments in Section \ref{sec:num} and \ref{sec:real}, we demonstrate the benefit of the proposed approach.
We conclude the paper in Section \ref{sec:con} with some discussion of the limitation of the proposed method and some future research directions.

\section{Previous Work and Our Contribution}\label{sec:literature}

\subsection{Existing Literature}\label{sec:review}

For A/B tests that only involve covariates but not networks, most existing works advocate the necessity of covariate balancing between the treatment groups \citep{morgan2012rerandomization, rubin2005causal, morgan2015rerandomization, bertsimas2015power, kallus2018optimal,li2021covariate}. 
For controlled experiments on networks, both theoretical and methodological works have been developed.
See \cite{gui2015network, phan2015natural, eckles2015design, basse2018limitations}, etc.
Among them, \cite{gui2015network} proposed an estimator of average treatment effect considering the interference between users on the network and a randomized balance graph partition to assign treatments to each of the subnetworks.
\cite{eckles2015design} used a graph cluster randomization to reduce the bias of the average treatment effect estimate.
\cite{nandy2020b} proposed the strategy to first apply approximate randomized controlled experiments solved by optimization and then use importance sampling to correct bias. 
Although focusing on networks, these works do not consider covariates.

In causal inference literature, the potential outcome framework is usually used. 
The average treatment effect is the target parameter for estimation and inference \citep{imbens2015causal}.
Under this setup, many causal inference works do not require any probabilistic model assumption on the response variable.
Alternatively, some recent works on the design for A/B experiments have operated under specific parametric model assumptions of the response variable, and optimal design idea is used to propose new design methods.
For example, \cite{bhat2020near} developed off-line and online mathematical programming approaches to solve this optimization problem, the objective function of which is exactly the $D_s$-optimal design criterion \citep{kiefer1961optimum,atkinson1992optimum}.
In this case, the $D_s$-optimality criterion minimizes the variance of the treatment effect of a parametric linear model.
Optimal design strategies have also been used under the assumption of the network-correlated outcome, such as \cite{basse2018model} and \cite{pokhilko2019d}. 
Outside the A/B test literature, there have been papers considering the optimal design problem with dependence between test subjects.
For example, \cite{martin1986design} considered the restricted randomized design when the test subjects are spatially correlated.
\cite{parker2017optimal} and \cite{koutra2017designing} considered the optimal design under linear network effects.

Similar to the aforementioned works, we also opt for the optimal design direction as indicated by the title of this paper. 
We argue that although the nonparametric potential outcome framework has an important theoretical basis, the reasonable model-assisted design approaches are not meritless. 
Even in the works based on the potential outcome framework, certain linear model assumptions are also used in both theoretical and numerical proofs to show the advantages and properties of the balancing criteria and the design approaches.
For example, \cite{morgan2012rerandomization} assumed an additive linear model to show how much variance reduction can be obtained by rerandomization using Mahalanobis distance.
\cite{gui2015network} used a linear additive model in terms of treatment effect, neighboring covariates, and neighboring responses as the rationale to create the sample estimator of average treatment effect, as well as to simulate data in numerical experiments.

\subsection{Differences and New Contributions}\label{sec:diff}

In this paper, we develop an optimal design approach for A/B experiments in the presence of both covariates and network connections.
The scope of the paper targets the social networks of users whose covariates information are influential to their reactions to the treatments. 
With a parametric conditional autoregressive (CAR) model that assumes the outcome is the sum of treatment effect, covariate effects, and correlated residuals for capturing network dependence, we focus on the estimation of the treatment effect parameter.
Based on this model, we develop an optimal design criterion such that the variance of the estimated treatment effect is minimized.
By design, we mean the assignment of treatment settings to each test subject in the context of this paper.
We focus on the simplest case where the experiment only involves two treatments, A and B.
But the proposed modeling and design method can be extended to the case of multiple treatment settings, as discussed in Section \ref{sec:con}.
The design of the treatment settings for multiple experimental factors is not the focus of this paper.

The resulting design criterion in Section \ref{sec:optd} depends on the network structure, the covariates, and an unknown network correlation parameter, and it can not be simply expressed as a sparse quadratic function of the design variables, which is different from \cite{pokhilko2019d}. 
Therefore, the mathematical formulation developed by \cite{pokhilko2019d} is infeasible to solve this new optimal design problem. 

We also assume the common Stable Unit Treatment Value Assumption (SUTVA) \citep{rubin1974estimating}, which states that the outcome of a test subject is unaffected by the treatment assignments of any other subjects.
In other words, we do not think there is any direct interference from the neighbors' treatment settings to the focused test subject's outcome. 
This assumption is appropriate for many applications where users are unawarely participating in the experiments run by the online service providers. 
Users' outcomes can still be correlated due to their network connections and covariates information. 

This non-interference assumption is different from the interference assumption in some existing works, such as \cite{parker2017optimal}. 
In \cite{parker2017optimal}, the proposed model includes the treatment assignments of connected subjects as linear predictors in their model. 
The experimental outcome of a subject under this model is affected by the treatment assignments of connected subjects. Different from this assumption, we assume that the experimental outcomes are correlated due to the network connection between subjects, which is characterized by the error term of the CAR model. However, the experimental outcomes are unaffected by the treatment assignments of connected subjects.  
Therefore, the proposed model of this paper is not comparable with the one in \cite{parker2017optimal} 
due to the different assumptions. 
Both can be useful under suitable scenarios and assumptions.

\section{Optimal Design with Network Connection}\label{sec:optd}

Consider $n$ test subjects participating in the experiment.
For the $i$-th subject, let $x_i\in\{-1,1\}$ represent the experimental allocation of A or B treatment, $\bm z_i=(z_{i1},\ldots,z_{ip})^\top$ be the $p$-dimensional covariates, and $y_i$ be the experimental outcome.
Assume that the outcome $y_i$ is a continuous random variable.
\cite{bhat2020near} models the relationship between $y_i$ and the effects of the treatment and covariates as
\begin{equation}\label{eq:lm}
  y_i=x_i\theta + \bm f^\top_i\bm \beta+\delta_i~~\mathrm{for}~~i=1, \ldots, n,
\end{equation}
Here $\bm \beta\in \mathbb{R}^{p+1}$ is the vector of the linear coefficients for $\bm f_i=(1, \bm z^\top_i)^\top$.
We name $\theta$ as the treatment effect.
Note that it is different from the notion of average treatment effect
which is the usual estimand in the potential outcome framework.
The model in \eqref{eq:lm} does not involve the network and the error terms $\delta_i$'s are assumed to be independent and identically distributed (iid) normal random variables with mean zero and a constant variance $\sigma^2$.
The purpose of design allocation is to reduce the variance of the least square estimator, which is unbiased if assumption \eqref{eq:lm} stands.
According to \cite{bhat2020near}, the variance of the least squares estimator $\hat\theta$ from \eqref{eq:lm} is $\mathrm{var}(\hat\theta)=\sigma^2[\bm x^\top (\bm I_n-\bm F(\bm F^\top \bm F)^{-1}\bm F^\top)\bm x]^{-1}$,
where $\bm x=(x_1, \ldots, x_n)^\top$, $\bm F^\top =(\bm f_1, \ldots, \bm f_n)$ and $\bm I_n$ is the identity matrix of size $n$. Therefore, the optimal design is obtained by minimizing $\var(\hat{\theta})$, which is equivalent to
\begin{align}\label{eq:opt_no}
\min & ~\bm x^\top \bm F(\bm F^\top \bm F)^{-1}\bm F^\top \bm x\\
\text{s.t. }   &-1\leq \sum^n_{i=1} x_i\leq 1,\quad \bm x \in \{-1,1\}^n. \nonumber
 \end{align}
The constraint $-1\leq \sum_{i=1}^n x_i\leq 1$ is imposed to make sure that the numbers of test subjects assigned to 1 and $-1$ are equal or within the difference of 1, which corresponds to the situation of even or odd sample size $n$.

Next, we extend the linear model \eqref{eq:lm} to the case with the network connection.
We require that the network between subjects is known to the experimenter just like the covariate information.
Assume this network form a simple undirected graph with nodes representing the test subjects.
If two test subjects are connected, there is a single edge between the two corresponding nodes.
Such a network can be represented by an $n\times n$ adjacency matrix $\bm W$, or incidence matrix.
Its diagonal entries are 0's, whereas off-diagonal entries ($i\neq j$) are
\begin{equation}
\label{eq:W}
w_{ij}=\begin{cases}
1, & \text{if node $i$ and node $j$ are adjacent or connected}\\
0, & \text{otherwise}.
\end{cases}
\end{equation}
Obviously, $\bm W$ is symmetric.
We denote the number of adjacent neighbors, or degree, of the $i$-th node as $m_i=\sum^n_{j=1}w_{ij}$, and $m=\sum^n_{i=1}m_i$ is twice of the total number of edges in this graph.

To add the network's influence into the linear additive model \eqref{eq:lm}, we propose the conditional autoregressive or CAR distribution \citep{cressie1993spatial, rue2005gaussian, banerjee2014hierarchical} for $\delta_i$'s to represent the network dependence between the connected test subjects. 
According to the CAR model, 
\begin{equation}\label{eq:delta}
\delta_i|\delta_1, \ldots, \delta_{i-1}, \delta_{i+1}, \ldots, \delta_n\sim N\left(\rho\sum_{j\neq i}\frac{w_{ij}\delta_j}{m_i}, \frac{\sigma^2}{m_i}\right),
\end{equation}
where $\sigma^2$ is the variance and $0\leq\rho< 1$ is a correlation parameter characterizing the strength of network dependence. 
Equivalently, $\bm\delta=(\delta_1, \ldots, \delta_n)^\top$ follows a multivariate normal distribution
\begin{equation}\label{eq:car1}
\bm\delta\sim \mathcal {MVN}_n(0, \sigma^2\bm R^{-1}(\rho, \bm W)),
\end{equation}
where $\bm R(\rho, \bm W)=(\bm D-\rho \bm W)$ with $\bm D=\diag\{m_1, \ldots, m_n\}$.
The matrix $\bm R(\rho, \bm W)$ is positive definite when $0\leq \rho < 1$ and $m_i\geq 1$ for $i=1, \ldots, n$ \citep{ver2018relationship}.
The proof of the equivalence of \eqref{eq:delta} and \eqref{eq:car1} is given by \cite{besag1974spatial} and illustrated by \cite{pokhilko2019d} under the framework of network A/B test.

With the CAR model assumption, the outcome $y_i$ depends on the network connection for the $i$-th subject but does not depend on the treatment allocation of the connected subjects.
In this way, the outcome of the subject is mainly decided by him/herself and the treatment he/she receives.
Since most social networks are built on positive connections between users, we assume the influence from the network is synergistic to users and thus $\rho$ is positive.
When $\rho=0$, the model assumption returns to the linear model \eqref{eq:lm} which does not involve a network. 

If the network correlation parameter $\rho$ is known and $0\leq \rho <1$, the variance of the least squares estimator $\hat\theta$ can be expressed by
\begin{equation}\label{eq:var_theta}
\var(\hat{\theta})=\sigma^2 \left[{\bm x}^\top \bm K\bm x\right]^{-1},
\end{equation}
where $\bm x=(x_1, \ldots, x_n)^\top$ is the treatment assignments for all subjects, and $\bm K$ is an $n\times n$ matrix
\begin{equation}\label{eq:K}
\bm K=(\bm D-\rho \bm W)-(\bm D-\rho \bm W)\bm F\left[\bm F^\top (\bm D-\rho \bm W) \bm F\right]^{-1}\bm F^\top (\bm D-\rho \bm W),
\end{equation}
with the covariates matrix $\bm F^\top =(\bm f_1, \ldots, \bm f_n)$.
Here, to make $\bm F^\top (\bm D-\rho \bm W) \bm F$ invertible, we require $\bm F$ to be full-rank, i.e., $\text{rank}(\bm F)=p+1$.
For the same reason, we require $m_i\geq 1$ for all the nodes. 
Therefore, the proposed CAR model does not apply to the isolated nodes whose degree $m_i=0$.
In Section \ref{sec:real}, we explain how to deal with the isolated nodes if they exist.
The optimal design $\bm x$ minimizes the variance of estimated treatment effect in \eqref{eq:var_theta}.
This optimal design is also known as a $D_s$-optimal design in the optimal design literature \citep{kiefer1961optimum,atkinson1992optimum}.
Equivalently, we express the optimal design as the solution of
\begin{align}\label{eq:max}
  \mathrm{max} ~& T(\bm x, \rho):=\bm x^\top \bm K\bm x,\\
  \text{s.t. } -1\leq &\sum_{i=1}^n x_i\leq 1, \text{ and } \bm x \in \{-1,1\}^n,\nonumber
\end{align}
which maximizes the precision (as the inverse of variance) of the estimated treatment effect.
Since
\begin{equation}\label{eq:objK}
\bm x^\top \bm K\bm x=\bm x^\top(\bm D-\rho \bm W)\bm x-\bm x^\top(\bm D-\rho \bm W)\bm F\left[\bm F^\top (\bm D-\rho \bm W)\bm F\right]^{-1}\bm F^\top (\bm D-\rho \bm W)\bm x,
\end{equation}
and $\bm x^\top(\bm D-\rho \bm W)\bm x=m-\rho \bm x^\top\bm W\bm x$, we have that
\begin{equation}\label{eq:min}
T(\bm x, \rho)=m-T_1(\bm x, \rho)-T_2(\bm x, \rho),
\end{equation}
where
\begin{align*}
T_1(\bm x, \rho)& =\rho\bm x^\top \bm W\bm x=\rho\sum_{ij}w_{ij}x_i x_j, \\
T_2(\bm x, \rho)&=\bm x^\top (\bm D-\rho \bm W)\bm F\left[\bm F^\top (\bm D-\rho \bm W)\bm F\right]^{-1}\bm F^\top (\bm D-\rho \bm W)\bm x.
\end{align*}
Note that minimizing $T_1(\bm x, \rho)$ would push $x_i$ and $x_j$ to be assigned with different treatments whenever $w_{ij}=1$.
To facilitate the discussion, we name this condition ``connection balance'', meaning that the two connected subjects are assigned with different treatment settings.
Intuitively, this is a meaningful condition since two connected test subjects are usually similar in many aspects of their background.
Thus, the most likely factor contributing to their difference in outcomes is the treatment setting.
This condition is consistent with the optimal design for the A/B test without covariates in \cite{pokhilko2019d}.
In the extremely simple and artificial case illustrated later in Figure \ref{fg:illustration} in Section \ref{sec:hybrid}, such perfect balance can be achieved. 
For real networks, the connection balance can only be achieved to a certain degree but rarely perfectly. 
Also, $T_2(\bm x, \rho)$ can be viewed as a network re-weighted Mahalanobis distance in \cite{morgan2012rerandomization}, since it can be expressed by
\[
T_2(\bm x, \rho) =\bm x^\top (\bm D-\rho \bm W)\bm F\bm \Sigma^{-1}_n \bm F^\top (\bm D-\rho \bm W)\bm x,
\]
with $\bm \Sigma_n=\bm F^\top (\bm D-\rho \bm W)\bm F$.
Therefore, the objective in \eqref{eq:min} contains $T_1(\bm x, \rho)$ to achieve connection balance, and $T_2(\bm x, \rho)$ to achieve covariate balance.
One critical issue is that the optimality criterion depends on the value of $\rho$.
In practice, $\rho$ is an unknown parameter.
Next, we are going to discuss the choice of $\rho$.

\section{Locally Optimal Design}\label{sec:BayOptD}

The optimal design criterion $T(\bm x,\rho)$ depends on the network correlation parameter $\rho$, which is usually unknown before experiments.
We can use Bayesian optimal design to handle the uncertainty of the unknown parameters.
Using its most common formulation, we should optimize the expectation of the design criterion, i.e., $\mathbb{E}_{\rho}[T(\bm x, \rho)]$, with respect to a user-specified prior distribution of the parameter $\rho$.
But even with the simple uniform prior for $\rho$, the expectation does not have a tractable form.
Many numerical methods, such as quadrature, Quasi-Monte Carlo, Markov Chain Monte Carlo, etc., have to be used to compute the integration.
Please see \cite{ryan2014towards}, \cite{ryan2016review}, and \cite{drovandi2018improving} for more comprehensive review on the advanced computational methods on Bayesian optimal designs.

To simplify the computation, we investigate the property of $T(\bm x, \rho)$ with respect to $\rho$ to find an analytic surrogate of $\mathbb{E}[T(\bm x, \rho)]$.
We first discover the concavity of $T(\bm x, \rho)$ with respect to $\rho$ in Theorem \ref{thm:concave}.
Based on Jenson's Inequality, the conclusion in Corollary \ref{cor:upper} holds directly.
The proof of Theorem \ref{thm:concave} is provided in the Supplement.
Based on the two results, we propose to use $T(\bm x,\rho_0)$, the upper bound of $\mathbb{E}[T(\bm x, \rho)]$, as the surrogate of the objective to obtain design allocation.

\begin{theorem}\label{thm:concave}
For $\rho \in (0,1)$, and any given design $\bm x$, the design criterion $T(\bm x,\rho)$ is a concave function with respect to $\rho$.
\end{theorem}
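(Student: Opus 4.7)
The plan is to split $T(\bm x,\rho)$ into a piece that is affine in $\rho$ plus a piece that I will show is convex in $\rho$, so that the sum is concave. Using the decomposition \eqref{eq:min}, write $T(\bm x,\rho)=m-T_1(\bm x,\rho)-T_2(\bm x,\rho)$. The term $m-T_1(\bm x,\rho)=m-\rho\,\bm x^\top\bm W\bm x$ is affine in $\rho$ and hence both concave and convex, so the theorem reduces to showing that $T_2(\bm x,\rho)$ is convex in $\rho$. Setting $\bm u(\rho):=\bm F^\top(\bm D-\rho\bm W)\bm x$ and $\bm V(\rho):=\bm F^\top(\bm D-\rho\bm W)\bm F$, I need to prove that $\varphi(\rho):=\bm u(\rho)^\top\bm V(\rho)^{-1}\bm u(\rho)$ is convex on $(0,1)$. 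The matrix $\bm V(\rho)$ is symmetric and positive definite on $[0,1)$ by the same assumptions used to define $\bm K$ in \eqref{eq:K}, so both $\bm V(\rho)^{-1}$ and the symmetric positive definite square root $\bm V(\rho)^{1/2}$ are well defined.

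The main step is a direct computation of $\varphi''(\rho)$ and its rewriting as a squared norm. Because $\bm R(\rho)=\bm D-\rho\bm W$ is affine in $\rho$, the maps $\bm u$ and $\bm V$ are affine with constant first derivatives $\bm a:=-\bm F^\top\bm W\bm x$ and $\bm C:=-\bm F^\top\bm W\bm F$ (where $\bm C$ is symmetric), and vanishing second derivatives. Using $(\bm V^{-1})'=-\bm V^{-1}\bm C\bm V^{-1}$ and applying the product rule twice, I expect to arrive at
\begin{equation*}
\varphi''(\rho)=2\bm a^\top\bm V^{-1}\bm a-4\bm a^\top\bm V^{-1}\bm C\bm V^{-1}\bm u+2\bm u^\top\bm V^{-1}\bm C\bm V^{-1}\bm C\bm V^{-1}\bm u.
\end{equation*}
Introducing $\bm p:=\bm V^{-1/2}\bm a$, $\bm r:=\bm V^{-1/2}\bm u$, and the symmetric matrix $\bm S:=\bm V^{-1/2}\bm C\bm V^{-1/2}$, the three terms rewrite as $2\|\bm p\|^2$, $-4\bm p^\top\bm S\bm r$, and $2\|\bm S\bm r\|^2$, giving
\begin{equation*}
\varphi''(\rho)=2\|\bm p-\bm S\bm r\|^2\ge 0.
\end{equation*}
Hence $\varphi$, and therefore $T_2(\bm x,\rho)$, is convex in $\rho$; together with the affine piece this establishes concavity of $T(\bm x,\rho)$.

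The only real obstacle is algebraic bookkeeping: the product-rule expansion for $\varphi''$ produces several cross terms that collapse to the $-4\bm a^\top\bm V^{-1}\bm C\bm V^{-1}\bm u$ coefficient only after using the symmetry of $\bm V^{-1}\bm C\bm V^{-1}$, and spotting the right change of variables via $\bm V^{-1/2}$ is what reveals the sum-of-squares form. A higher-level shortcut is available: the matrix fractional function $(\bm u,\bm V)\mapsto\bm u^\top\bm V^{-1}\bm u$ is jointly convex on $\mathbb{R}^{p+1}\times\mathcal{S}^{p+1}_{++}$, and pre-composing with the affine map $\rho\mapsto(\bm u(\rho),\bm V(\rho))$ immediately yields convexity of $T_2$. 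I would still favor the explicit calculation above, since the identity $\varphi''(\rho)=2\|\bm p-\bm S\bm r\|^2$ pinpoints the degenerate case in which $T(\bm x,\rho)$ is locally linear in $\rho$ and may be useful for interpreting the locally optimal design later on.
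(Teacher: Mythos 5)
Your proof is correct and follows essentially the same route as the paper's: decompose $T=m-T_1-T_2$, note that $T_1$ is linear in $\rho$, and show that $\partial^2 T_2/\partial\rho^2$ is a nonnegative quadratic form in the positive definite matrix $\left[\bm F^\top(\bm D-\rho\bm W)\bm F\right]^{-1}$ --- your sum-of-squares expression $2\|\bm p-\bm S\bm r\|^2$ is algebraically identical to the paper's final form $2\bm x^\top\left[(\bm D-\rho\bm W)\bm C-\bm W\bm F\right]\bm A^{-1}\left[\bm C^\top(\bm D-\rho\bm W)-\bm F^\top\bm W\right]\bm x$. Your affine-composition bookkeeping (and the matrix-fractional-function shortcut you mention in passing) is somewhat tidier than the paper's term-by-term expansion of $T_2$, but the underlying argument is the same.
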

\begin{corollary}\label{cor:upper}
Given a prior distribution of $\rho$, $p(\rho)$, for $\rho\in (0,1)$, a tight upper bound for $\mathbb{E}\left[T(\bm x, \rho)\right]$ is $\mathbb{E}\left[T(\bm x, \rho)\right]\leq T(\bm x,\rho_0)$, where $\rho_0:=\mathbb{E}(\rho)$ is the population mean of $\rho$ based on $p(\rho)$.
\end{corollary}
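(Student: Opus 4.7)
The plan is to derive the corollary as an essentially immediate consequence of Theorem \ref{thm:concave}. Since Theorem \ref{thm:concave} establishes that, for each fixed design $\bm x$, the map $\rho \mapsto T(\bm x,\rho)$ is concave on $(0,1)$, the result follows from Jensen's inequality applied to the random variable $\rho$ drawn from the prior $p(\rho)$ supported on $(0,1)$.

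Concretely, I would proceed as follows. First, I would verify that Jensen's inequality is applicable: the prior $p(\rho)$ has support contained in the open interval $(0,1)$, and on this interval $T(\bm x,\rho)$ is concave in $\rho$ by Theorem \ref{thm:concave}. Since $T(\bm x,\rho)$ is continuous in $\rho$ on $(0,1)$ and the design space is finite, the expectation $\mathbb{E}_{\rho}[T(\bm x,\rho)]$ is well defined. Also, $\rho_0 := \mathbb{E}(\rho) \in (0,1)$ since the prior is supported on that open interval, so $T(\bm x,\rho_0)$ is well defined. I would then invoke Jensen's inequality for concave functions to conclude
\begin{equation*}
\mathbb{E}\bigl[T(\bm x,\rho)\bigr] \;\leq\; T\bigl(\bm x,\mathbb{E}(\rho)\bigr) \;=\; T(\bm x,\rho_0).
\end{equation*}

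For the claim of tightness, I would argue that no sharper bound of this form holds in general. If $p(\rho)$ is a point mass at $\rho_0$, then both sides of the inequality equal $T(\bm x,\rho_0)$, so equality is attained. More generally, for any $\bm x$ for which $T(\bm x,\cdot)$ happens to be affine on the support of $p(\rho)$, Jensen's inequality is an equality. Hence $T(\bm x,\rho_0)$ is the tightest upper bound that depends on the prior only through its mean $\rho_0$.

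I do not anticipate a substantive obstacle here; the content of the corollary is already essentially contained in Theorem \ref{thm:concave}, and the only thing to check beyond invoking Jensen is the measurability and integrability of $T(\bm x,\rho)$ over the prior, which are immediate from continuity of $T(\bm x,\cdot)$ in $\rho$ on the compact-in-$\rho$ portion of the support together with boundedness of $T(\bm x,\rho)$ for fixed $\bm x$ on any closed subinterval of $(0,1)$.
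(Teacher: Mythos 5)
Your proposal is correct and matches the paper's own argument: the paper states that Corollary \ref{cor:upper} follows directly from Theorem \ref{thm:concave} via Jensen's inequality, which is exactly the route you take. Your additional checks on integrability and the discussion of when equality is attained are fine but not needed beyond what the paper asserts.
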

We define the locally optimal design by solving
\begin{align}\label{eq:opt_bayesian}
\max & ~T(\bm x, \rho_0)\\
\text{s.t. }   -1\leq &\sum_{i=1}^n x_i\leq 1, \text{ and } \bm x \in \{-1,1\}^n,\nonumber
\end{align}
whose objective function is equivalent to the original objective in \eqref{eq:min} with $\rho$ specified as the mean of the prior distribution.
Using a specific $\rho_0$ in the design criterion to obtain the optimal design is known as the locally optimal design \citep{chaloner1995bayesian}.

The quality of the design based on the surrogate problem in \eqref{eq:opt_bayesian} can be investigated from two aspects. 
First, we provide the analytic gap between $T(\bm x, \rho_0)$ and $\mathbb{E}[T(\bm x, \rho)]$ and a simulation example to illustrate the typical range of the gap between the surrogate local design criterion $T(\bm x, \rho_0)$ and the global criterion $\mathbb{E}[T(\bm x,\rho)]$.
Proposition \ref{prop:gap} of the analytic gap and simulation results in Figure \ref{fg:gap} are given in the Supplement.

Next, we investigate whether the surrogate design criterion $T(\bm x,\rho_0)$ is robust to the choice of $\rho_0$. 
To do so, we check of the correlation between any $T(\bm x,\rho_0)$ and $T(\bm x,\rho)$ for a pair of fixed $(\rho_0, \rho)$ for any randomly generated design $\bm x$. 
If the correlation between  $T(\bm x,\rho_0)$ and $T(\bm x,\rho)$ is large and positive, it indicates that a design resulting in large $T(\bm x,\rho_0)$ is also likely to lead to large $T(\bm x,\rho)$. 
In Proposition \ref{prop:cor}, we have given the formula to calculate the exact correlation $\cor_{\bm x}(T(\bm x,\rho_0),T(\bm x,\rho))$ for all the completely randomized design in which $x_i$'s are i.i.d. random variables and $\Pr(x_i=1)=\Pr(x_i=-1)=0.5$. To visualize the correlation,
we also provide a simulated example using a network with 50 nodes and five-dimensional covariates associated with each node.
The edges of the network are generated as independent Bernoulli random variables with a probability of 0.08. The covariates are generated as independent random variables taking values from  $\{-1, 1\}$ with equal probabilities.
The values of $\rho_0$ and $\rho$ are set to be 0.1, 0.3, 0.5, 0.7, and 0.9 and omit the case when $\rho_0=\rho$. 
For each pair of $(\rho_0, \rho)$ values, we generate 1000 completely randomized designs and compute the corresponding $T(\bm x,\rho_0)$ and $T(\bm x,\rho)$ for each design. 
Figure \ref{fig:cor} returns the scatter plot of $T(\bm x,\rho_0)$ and $T(\bm x,\rho)$ for the 1000 completely randomized design for different $(\rho_0, \rho)$ values. 
It shows that $T(\bm x, \rho_0)$ and $T(\bm x, \rho)$ are strongly linearly correlated.
Also, the \emph{exact} correlation values based on Proposition \ref{prop:cor} ranges from  0.75-0.99 for values of $(\rho_0,\rho)$ in the simulation.
From these results, it is safe to say that the locally optimal design is robust to the choice of $\rho_0$ value. 
Particularly, for $\rho_0=0.5$, the correlation values between $T(\bm x,\rho_0)$ and $T(\bm x, \rho)$ where $\rho=0.1, 0.3, 0.7, 0.9$ are all above 0.9. 
Therefore, the optimal design obtained based on $\rho_0=0.5$ is the most robust for the model with the true value of $\rho\in (0.1, 0.9)$.

\begin{figure}
	\centering
	\includegraphics[scale=0.6]{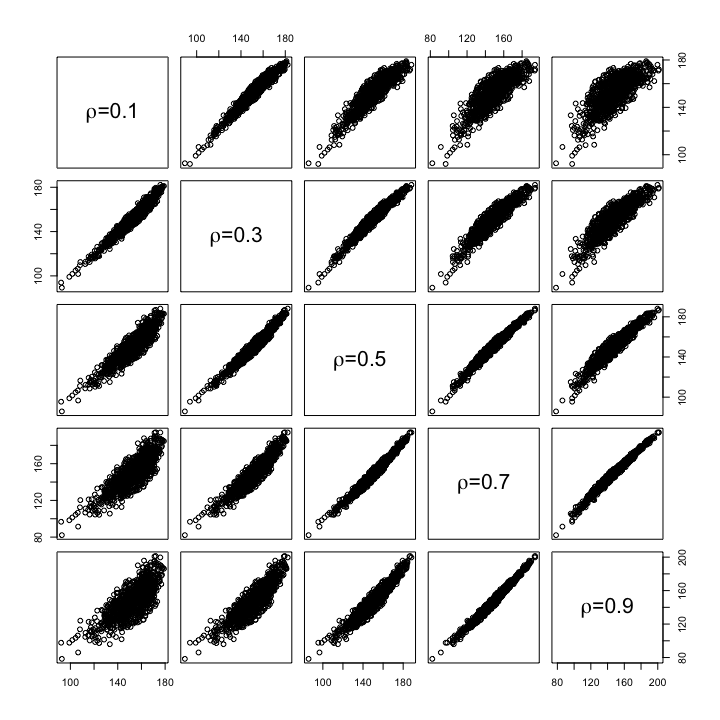}
	\caption{Scatter plot of $T(\bm x, \rho_0)$ and $T(\bm x, \rho)$  of each pair of $(\rho_0, \rho)$ with 1000 randomly generated designs.}\label{fig:cor}
\end{figure}

Although using the local design criterion $T(\bm x,\rho_0)$ is a simple solution, the quality of the resulting design can be validated. 
Simulation examples in Section \ref{sec:num} can further demonstrate that the performance of the locally optimal design is equally good as the \emph{true} optimal design in which parameter $\rho$ is set to be its true known value.

\section{A Hybrid Solution Approach to Obtain Optimal Design}\label{sec:hybrid}

Since the optimal design in \eqref{eq:opt_bayesian} is the integer solution of the maximum of a quadratic form, obtaining the exact solution of such problem is challenging \citep{belotti2013mixed, bhat2020near}.
According to \eqref{eq:min} and \eqref{eq:opt_bayesian}, the maximization can be converted to minimization of $T_1(\bm x, \rho_0)+T_2(\bm x, \rho_0)$, where
\begin{align*}
T_1(\bm x, \rho_0)& =\rho_0\bm x^\top \bm W\bm x\\
T_2(\bm x, \rho_0)&=\bm x^\top (\bm D-\rho_0\bm W)\bm F\left[\bm F^\top (\bm D-\rho_0\bm W)\bm F\right]^{-1}\bm F^\top (\bm D-\rho_0\bm W)\bm x,
\end{align*}
and 
\[
T_1(\bm x, \rho_0)+T_2(\bm x, \rho_0) =\bm x^\top \left[\rho_0\bm W+(\bm D-\rho_0\bm W)\bm F\left[\bm F^\top (\bm D-\rho_0\bm W)\bm F\right]^{-1}\bm F^\top (\bm D-\rho_0\bm W)\right] \bm x.
\]
The matrix
\[\rho_0\bm W+(\bm D-\rho_0\bm W)\bm F\left[\bm F^\top (\bm D-\rho_0\bm W) \bm F\right]^{-1}\bm F^\top (\bm D-\rho_0\bm W)\]
is not necessarily a positive semi-definite matrix. 
As a result, the minimization problem of $T_1(\bm x,\rho_0)+T_2(\bm x,\rho_0)$ can not be solved directly as the problem in \eqref{eq:opt_no}.
We develop a hybrid solution approach to resolve this issue.

Notice that the matrix $(\bm D-\rho_0\bm W)\bm F\left[\bm F^\top (\bm D-\rho_0\bm W)\bm F\right]^{-1}\bm F^\top (\bm D-\rho_0\bm W)$ in $T_2(\bm x, \rho_0)$ is positive definite, and thus the minimization of $T_2(\bm x, \rho)$ can be solved by an outer-approximation based branch-and-cut algorithm as in \eqref{eq:opt_no}. 
Hence, we reformulate the minimization of $T_1(\bm x, \rho_0)+T_2(\bm x, \rho_0)$ to be
\begin{align} \label{eq:formulation_temp}
   \min \ &     T_2(\bm x, \rho_0)\\
   \text{s.t. } &  T_1(\bm x, \rho_0) \leq q, \nonumber\\
    -1\leq &\sum_{i=1}^n x_i\leq 1, \text{ and } \bm x \in \{-1,1\}^n,\nonumber
\end{align}
which uses the constraint $T_1(\bm x, \rho_0) \leq q$ to control the value of $T_1(\bm x, \rho_0)$ to be small enough.
Since $\rho_0$ is only a constant multiplier in $T_1(\bm x,\rho_0)$, this constraint can be reduced to $\bm x^\top \bm W \bm x\leq q$, and it is critical to specify the value of $q$.

The value of $\bm x^\top \bm W \bm x$ greatly depends on the number of subjects and the structure of the network.
Therefore, the cap $q$ should be related to a specific network.
In Theorem \ref{thm:t1}, we investigate the asymptotic behaviors of $\bm x^\top \bm W \bm x$ via random allocation with equal probability to decide a viable way to specify the value of $q$.
Assume that the entire network contains unlimited users with a deterministic network structure, and the experiments are conducted on a subset of $n$ users from the entire network.
Therefore, the design vector $\bm x$ is the only random component that causes the stochastic behavior of the statistic $\bm x^\top \bm W \bm x$.
The proof of Theorem \ref{thm:t1} is provided in the Supplement. 
\begin{theorem}\label{thm:t1}
Consider that $x_1, \ldots, x_n$ in $\bm x$ are independent and identically distributed random variables from the
 discrete distribution with $\Pr(x_i=1)=\Pr(x_i=-1)=0.5$.
As $n\rightarrow\infty$,
\[
\frac{\bm x^\top \bm W \bm x}{\sqrt{m}}\xrightarrow{d} N(0, 1),
\]
where $\xrightarrow{d}$ represents convergence in distribution and $m=\sum_{i,j} w_{ij}$.
\end{theorem}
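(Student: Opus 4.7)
The plan is to recognize $Q:=\bm x^\top \bm W\bm x$ as a homogeneous quadratic form in independent Rademacher variables with vanishing diagonal ($w_{ii}=0$), and to invoke a central limit theorem tailored to such forms. I would use the martingale CLT of McLeish because it handles the dependence among the summands $w_{ij}x_i x_j$ transparently; alternatively de Jong's CLT for quadratic forms in independent random variables would deliver the same conclusion in one shot.

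First I would record the exact first two moments. Writing $Q=2\sum_{i<j}w_{ij}x_i x_j$ via symmetry of $\bm W$, and using independence of the $x_i$ together with the Rademacher identity $x_i^2\equiv 1$, one obtains $E[Q]=0$ and
\[
\var(Q) \;=\; 4\sum_{i<j} w_{ij}^2 \;=\; 2\sum_{i,j} w_{ij} \;=\; 2m,
\]
which identifies the correct normalizing constant (up to the factor $\sqrt{2}$ appearing in the statement). I would then introduce the martingale decomposition $Q=\sum_{k=1}^n D_k$ with
\[
D_k \;:=\; 2 x_k \sum_{j<k} w_{jk} x_j, \qquad \mathcal{F}_k \;:=\; \sigma(x_1,\ldots,x_k).
\]
Since $x_k$ is independent of $\mathcal{F}_{k-1}$ and has mean zero, $E[D_k\mid\mathcal{F}_{k-1}]=0$, so $\{D_k\}$ is a martingale difference sequence. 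To apply McLeish's theorem it remains to verify (i) the conditional variance
\[
V_n \;=\; \sum_{k=1}^n E\bigl[D_k^{2}\,\big|\,\mathcal{F}_{k-1}\bigr] \;=\; 4\sum_{k=1}^{n}\Bigl(\sum_{j<k} w_{jk} x_j\Bigr)^{2}
\]
satisfies $V_n/\var(Q)\to 1$ in probability, and (ii) the Lindeberg negligibility condition on the $D_k$.

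The main obstacle will be step (i), because $V_n - E[V_n]$ is itself a quadratic form in $\{x_j\}$ arising from the cross terms in $\bigl(\sum_{j<k}w_{jk}x_j\bigr)^{2}$. A direct second-moment calculation shows $\var(V_n)$ is controlled by $\sum_i m_i^{2}$, which is $o(m^{2})$ precisely under the mild structural assumption that no single vertex degree dominates the total edge count---exactly the kind of regularity implicit in the theorem's setup of expanding subsets of a fixed deterministic background network. The same regularity, combined with the deterministic bound $|D_k|\le 2 m_k$ and the fact that $\max_k m_k = o(\sqrt{m})$ under it, makes the Lindeberg condition in (ii) routine. Assembling these pieces in McLeish's CLT then yields $Q/\sqrt{\var(Q)}$ converging in distribution to $N(0,1)$, which is the statement of the theorem up to the normalizing constant.
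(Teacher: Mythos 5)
Your route is genuinely different from the paper's, and it is the more careful one. The paper's proof shows that the products $x_ix_j$ are pairwise uncorrelated, invokes a lemma saying that for $\{\pm1\}$-valued variables zero covariance implies independence, and then declares the summands $w_{ij}x_ix_j$ to be i.i.d.\ so that ``the central limit theorem'' applies. That argument establishes only \emph{pairwise} independence---$x_1x_2$, $x_1x_3$, $x_2x_3$ are pairwise independent yet their product is identically $1$---and the classical i.i.d.\ CLT requires mutual independence, so the paper's final step has a real gap that your martingale decomposition $D_k=2x_k\sum_{j<k}w_{jk}x_j$ with McLeish's (equivalently, de Jong's) CLT is designed precisely to close. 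Two of your side remarks are substantive rather than pedantic. First, $\var(\bm x^\top\bm W\bm x)=4\sum_{i<j}w_{ij}^2=2m$, so the natural normalization is $\sqrt{2m}$; the paper's $\sqrt{m}$ comes from counting the $m$ \emph{ordered} pairs as independent summands even though the $(i,j)$ and $(j,i)$ terms are identical copies. Second, some degree-regularity condition such as $\sum_i m_i^2=o(m^2)$ (or $\max_i m_i=o(\sqrt{m})$) really is needed and is absent from both the statement and the paper's proof: for the complete graph one has $\bm x^\top\bm W\bm x=(\sum_i x_i)^2-n$, and the normalized statistic converges to a centered $\chi^2_1$, not a normal. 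What your approach buys is a proof that is actually valid under an explicit hypothesis; what it still owes is the execution---you assert rather than verify that $\var(V_n)\lesssim\sum_i m_i^2$ and that the Lindeberg condition follows from $\max_k m_k=o(\sqrt{m})$, and you should state the degree condition as an explicit added assumption, since without it the conclusion is false.
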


Since $\bm x^\top \bm W \bm x/\sqrt{m}$ asymptotically follows the standard normal distribution, we can specify the gap $q$ according to the standard normal percentiles. 
Let $z_\alpha$ be the $100\alpha\%$ percentile of the standard normal distribution.
If we specify a smaller value of $\alpha$, the constraint is more restrictive. 
The optimization problem is reformulated as
\begin{align} \label{eq:formulation2}
   \min      \bm x^\top (\bm D-\rho_{0}\bm W)&\bm F\left[\bm F^\top (\bm D-\rho_{0} \bm W)\bm F\right]^{-1}\bm F^\top (\bm D-\rho_0 \bm W)\bm x\\
   \text{s.t. } &  \bm x^\top \bm W\bm x \leq \sqrt{m}z_\alpha, \nonumber\\
    &-1\leq \sum^n_{i=1} x_i\leq 1,\nonumber\\
   & \bm x \in \{-1,1\}^n. \nonumber
\end{align}
This minimization problem with a positive definite quadratic objective and two-level decision variables can be solved by off-the-shelf optimization solvers.
Note that, the constraint $-1\leq \sum^n_{i=1} x_i\leq 1$ is inserted to achieve a balanced allocation of two treatments. 
In terms of implementation, this constraint usually improves the computation cost since it also reduces the number of feasible solutions.
The formulation in \eqref{eq:formulation2} changes the minimization of two objective functions $T_1(\bm x,\rho_0)$ and $T_2(\bm x, \rho_0)$ into the minimization of one and constraining the other, and thus the name of \emph{hybrid solution} approach.

We provide an illustration of the proposed design using a simple bipartite network of 20 nodes.
For simplicity, the covariate $z_i$ is a one-dimensional vector taking value from $\{-1, 1\}$. 
We set the correlation parameter $\rho_0$ as 0.5 and parameter $\alpha$ as 0.001 for the proposed approach in \eqref{eq:formulation2}.
The locally optimal design is visualized in Figures \ref{fg:illustration}.
The simple bipartite network can be divided into two disjoint sets, and the treatment allocation is orthogonal to the covariate vector, which achieves perfect balance for the covariate and the network. 
However, perfect balancing may not be achievable for general cases, but small values of $T_1$ and $T_2$ can still provide a better-balanced structure of network connection and covariates, respectively. 

\begin{figure}[t]
\centering
\includegraphics[scale=0.5, trim={100 100 100 100}]{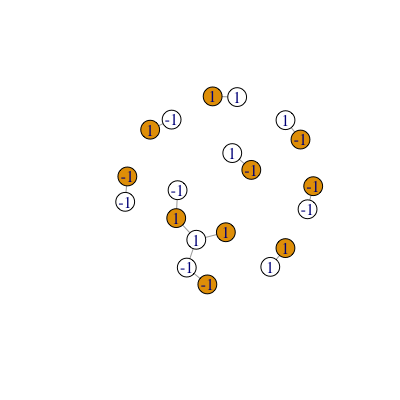}
\caption{Visualization of the optimal design allocation. Two treatments are denoted by different colors. The covariate value 1 or -1 of each subject is labeled in each node.}\label{fg:illustration}
\end{figure}

We first discuss the choice of $\alpha$ in \eqref{eq:formulation2}. The hybrid problem in \eqref{eq:formulation2} is generally computationally efficient to solve for networks with 100-5000 nodes. Therefore, it is feasible to obtain designs through conducting a sensitivity analysis with a series of decreasing $\alpha$ values and selecting the $\alpha$ value when the change of objective values $T(\bm x, \rho_0)$ in \eqref{eq:opt_bayesian} is small, or the improvement of precision stops increasing as $\alpha$ decreases. One numerical example is used to demonstrate the performance of the design to different choices of $\alpha$'s in Section \ref{sec:robust}.


At last, we remark on the choice of $\rho_0$ in this new formulation \eqref{eq:formulation2}.  
Like $T(\bm x,\rho_0)$, the new objective function $T_2(\bm x,\rho_0)$ in \eqref{eq:formulation2} is also a quadratic form of $\bm x$. 
Therefore, Proposition \ref{prop:cor} still holds for any correlation between $T_2(\bm x,\rho_0)$ and $T(\bm x,\rho)$ for any pair of $(\rho_0, \rho)$. 
The quality of design with a given $\rho_0$ can be assessed on any possible true value of $\rho$ using the analytic correlation between $T_2(\bm x, \rho_0)$ and $T_2(\bm x, \rho)$ similar to the discussion near the end of Section \ref{sec:BayOptD}.  
Particularly, in the special case without any covariates, i.e., $\bm F=\bm 1_n$, 
\[
T_2(\bm x, \rho)=(1-\rho) \frac{(\bm x^\top\bm m)^2}{\sum^n_{i=1}m_i},
\]
where $\bm m=(m_1, \ldots, m_n)^\top$ with $m_i$ be the number of adjacent neighbors of the $i$-th user. 
Therefore, $\cor_{\bm x}(T_2(\bm x, \rho_0), T_2(\bm x, \rho))=1$ for any $\rho_0$ and $\rho$. 
It indicates that there is no loss to replace an unknown true $\rho$ with a given $\rho_0$ in this special case.


\section{Numerical Study}\label{sec:num}

The purpose of optimal design is to reduce the variance (or equivalently, improve the precision) of the estimated treatment effect $\hat\theta$ in \eqref{eq:lm}. 
Since the optimal value of the design criterion can not be obtained directly, computing the classical measure ``design efficiency'' is not feasible.
Alternatively, we evaluate the quality of design by computing the percentage of the improvement in precision compared to the expected precision of random balanced designs.
\begin{proposition}\label{prop:prec}
Consider a random balanced design $\bm x$. 
The marginal distribution of each $x_i$ is $\Pr(x_i=1)=\Pr(x_i=-1)=0.5$ and  $\sum^n_{i=1} x_i$ follows the balance condition, i.e., $-1\leq\sum^n_{i=1} x_i\leq 1$.
The expected precision of the random balanced design is
\begin{equation}\label{eq:expected}
\mathbb{E}_{\bm x} \left(\sigma^{-2}\bm x^\top\bm K\bm x\right)=\sigma^{-2}\mathrm{tr}(\bm K\bm C),
\end{equation}
where $\bm C$ is an $n\times n$ matrix with all of the diagonal entries equal to 1 and all of the off-diagonal entries equal to a fixed constant $c$. The  value of $c$ is $-(n-1)^{-1}$ if $n$ is even and it is $-n^{-1}$ if $n$ is odd.
Here $\mathrm{tr}(\cdot)$ denotes the trace of a matrix.
The expectation in \eqref{eq:expected} is taken with respect to the probability distribution of $\bm x$. 
\end{proposition}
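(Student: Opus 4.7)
The plan is to reduce the expectation of the quadratic form to a trace of a deterministic matrix and then compute that matrix explicitly using symmetry and the balance constraint. Since $\bm x^\top \bm K \bm x$ is a scalar, we have $\bm x^\top \bm K \bm x = \mathrm{tr}(\bm K \bm x \bm x^\top)$, and linearity of expectation together with linearity of the trace gives
\[
\mathbb{E}_{\bm x}(\bm x^\top \bm K \bm x) = \mathrm{tr}\bigl(\bm K\, \mathbb{E}_{\bm x}(\bm x \bm x^\top)\bigr).
\]
Thus, after dividing by $\sigma^2$, it suffices to show that $\mathbb{E}_{\bm x}(\bm x \bm x^\top) = \bm C$, where $\bm C$ is defined as in the proposition.

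To compute $\mathbb{E}_{\bm x}(\bm x \bm x^\top)$, I would first handle the diagonal: since $x_i \in \{-1,1\}$, we have $x_i^2 = 1$ almost surely, so all diagonal entries equal $1$. For the off-diagonal entries, I would appeal to the exchangeability of a balanced design: since the joint distribution of $\bm x$ is invariant under permutations of coordinates (any permutation of a balanced vector is still a balanced vector with the same marginals), the quantity $\mathbb{E}(x_i x_j)$ is the same for every pair $i \neq j$. Denote this common value by $c$; then $\mathbb{E}_{\bm x}(\bm x \bm x^\top) = (1-c)\bm I_n + c \bm{1}_n \bm{1}_n^\top$, which has exactly the form of $\bm C$.

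To pin down $c$, I would exploit the balance constraint by computing $\mathbb{E}\bigl[(\sum_i x_i)^2\bigr]$ in two ways. On the one hand,
\[
\mathbb{E}\Bigl[\Bigl(\sum_i x_i\Bigr)^2\Bigr] = \sum_i \mathbb{E}(x_i^2) + \sum_{i \neq j} \mathbb{E}(x_i x_j) = n + n(n-1)c.
\]
On the other hand, when $n$ is even the balance condition $-1 \leq \sum_i x_i \leq 1$ combined with $\sum_i x_i \equiv n \pmod 2$ forces $\sum_i x_i = 0$, so the left-hand side is $0$ and $c = -(n-1)^{-1}$. When $n$ is odd, the same parity argument forces $\sum_i x_i \in \{-1,1\}$, so $(\sum_i x_i)^2 = 1$ almost surely; equating gives $1 = n + n(n-1)c$, whence $c = -n^{-1}$. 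Assembling the pieces yields $\mathbb{E}_{\bm x}(\bm x \bm x^\top) = \bm C$ and hence the claim.

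The only subtle point I would flag is the exchangeability step: it requires that the law of $\bm x$ is invariant under coordinate permutations, which is automatic if the "random balanced design" is understood as the uniform distribution over all balanced sign vectors (or, more generally, any permutation-symmetric distribution on that set), a condition implicitly guaranteed by the stated marginals together with the balance constraint. Once exchangeability is granted, the remainder is a short second-moment calculation, so the parity case split in determining $c$ is the main bookkeeping item rather than a genuine mathematical obstacle.
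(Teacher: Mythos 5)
Your proof is correct and follows essentially the same route as the paper's: reduce $\mathbb{E}(\bm x^\top\bm K\bm x)$ to $\mathrm{tr}\bigl(\bm K\,\mathbb{E}(\bm x\bm x^\top)\bigr)$ and determine the common off-diagonal value $c=\mathbb{E}(x_ix_j)$ from the balance constraint, treating the even and odd $n$ cases separately. The only cosmetic difference is that you pin down $c$ by computing $\mathbb{E}\bigl[(\sum_i x_i)^2\bigr]$, which is deterministic under the constraint, whereas the paper computes $\mathbb{E}(x_i\sum_j x_j)$ by conditioning on $\sum_j x_j=\pm 1$ and counting; both versions rest on the same exchangeability of the design that you rightly flag.
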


The proof of the above proposition is given in the Supplement. For any given design $\bm x_0$, the percentage of the improvement in precision with respect to the expected precision of the random balanced design can be expressed by
\begin{equation}\label{eq:precIM}
\text{PIP}(\bm x_0)\:=\frac{\sigma^{-2}\bm x^\top_0 \bm K\bm x_0-\mathbb{E}_{\bm x} \left(\sigma^{-2}\bm x^\top\bm K\bm x\right)}{\sigma^{-2}\bm x^\top_0 \bm K\bm x_0}
=1-\frac{\tr(\bm K\bm C)}{\bm x^\top_0 \bm K\bm x_0}.
\end{equation}
For short, we denote this percentage of improvement in precision by $\text{PIP}(\bm x_0)$. 
According to \eqref{eq:K}, the calculation of matrix $\bm K$ involves the network correlation parameter $\rho$.
Since \eqref{eq:precIM} is used to evaluate the design $\bm x_0$, naturally, we should use the true value of the network correlation, denoted by $\rho_t$, to compute $\text{PIP}(\bm x_0)$.
In the following simulation study, $\rho_t$ is part of the simulation settings.

In Section \ref{sec:robust}, we evaluate the robustness of the proposed design approach to different choices of $\alpha$ and $\rho_0$.
In Section \ref{sec:value}, we evaluate the advantages of the optimal design with network connection under different scenarios. 
In both subsections, we generate synthetic datasets, where the edges of the network are independently generated from a Bernoulli distribution with a constant probability, which is called \emph{network density}. 
If there are isolated nodes in the generated network, we connect each of them with a randomly selected neighbor to remove isolation and ensure that $m_i\geq 1$ in \eqref{eq:delta}.
Each node is associated with a $p$-dimensional covariates whose entries are randomly generated from $\{-1, 1\}$ with equal probabilities. 
To stabilize the results, we generate 10 copies of datasets and report the results in boxplots.

\subsection{Robustness on the Choices of $\alpha$ and $\rho_0$}\label{sec:robust}

In this subsection, we consider two versions of the proposed hybrid design approach.
\begin{itemize}
\item[1.] Locally optimal design: the optimal design obtained by solving the optimization problem in \eqref{eq:opt_bayesian}.
We specify the mean of the prior distribution to be 0.5, i.e., $\rho_0=0.5$. 
\item[2.] True optimal design: the optimal design obtained by maximizing the objective in \eqref{eq:max} with the true network correlation value $\rho_t$.
\end{itemize}
We use the hybrid approach in \eqref{eq:formulation2} to obtain both the locally and true optimal designs.
The comparison between the locally optimal design and the true optimal design shows the gap of replacing the true design criterion $T(\bm x,\rho_t)$ by its practical surrogate $T(\bm x,\rho_0)$.
For both designs, we use Gurobi \citep{gurobi2015gurobi} to solve the optimization problem, and the run-time is limited to 500 seconds.

First, we evaluate the performance of the locally optimal design with $\rho_0=0.5$ with different choices of $\alpha$. 
In this case, we fix $p=10$ and the network density is 0.08. 
Each boxplot in Figure \ref{fg:smallalpha} shows the $\text{PIP}(\bm x)$ values of 10 datasets. 
We can detect a slightly bigger $\text{PIP}(\bm x)$ for smaller $\alpha$ values.
However, this trend diminishes when $\alpha=0.001$ and $\alpha=0.0001$. 
Therefore, we set $\alpha=0.001$ for all subsequent test cases and real case studies. 
As stated in Section \ref{sec:hybrid}, we recommend a sensitivity check in practice. 
Since it is computationally efficient to compute the optimal design and $\text{PIP}(\bm x)$ value, the experimenter can obtain the optimal design for a sequence of $\alpha$ values and choose the one when further decreasing $\alpha$ does not increase PIP. 

\begin{figure}
\centering
\includegraphics[scale=0.7]{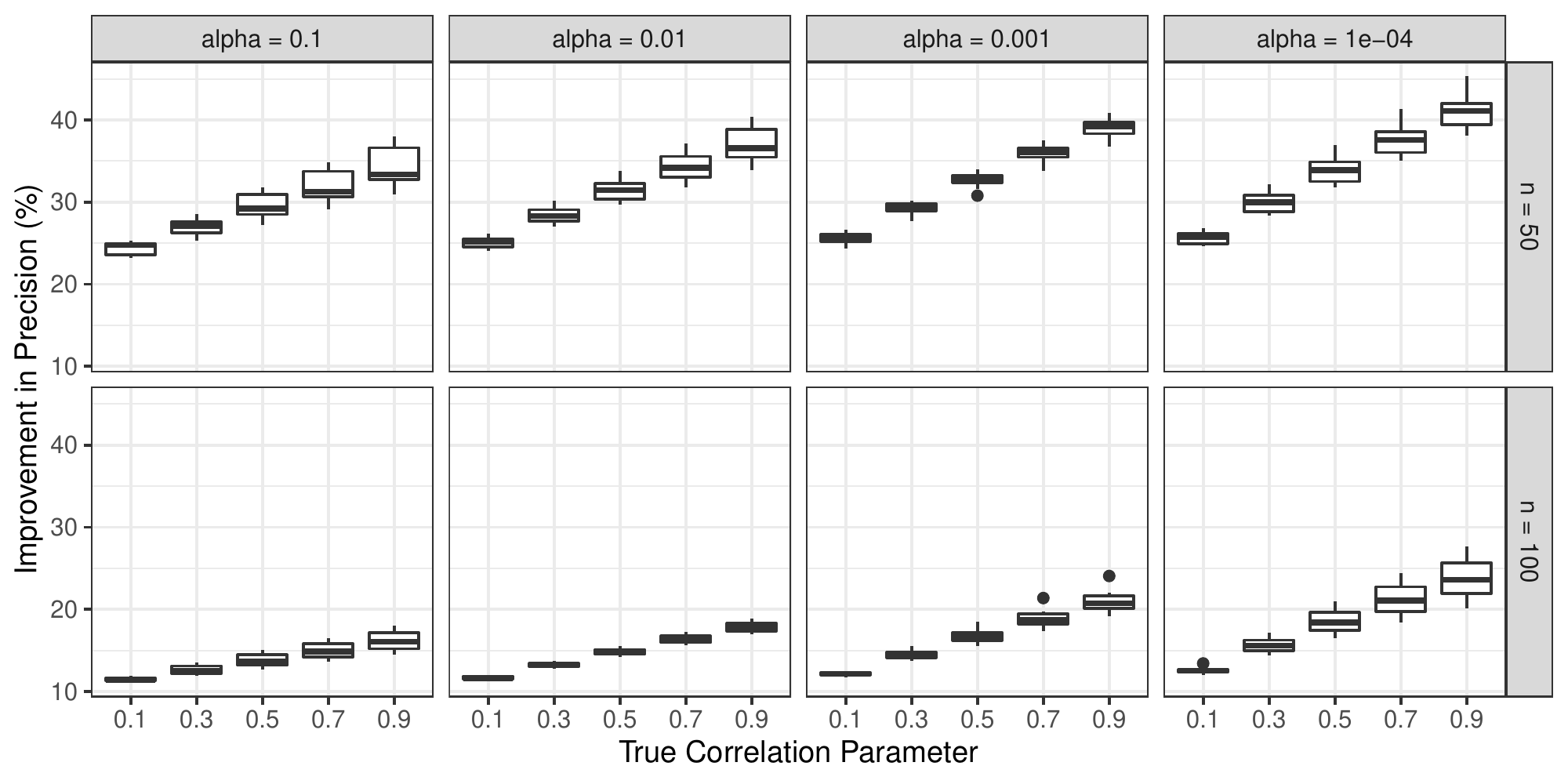}
\caption{$\text{PIP}(\bm x)$ of the locally optimal designs with $p=10$ and network density 0.08.}\label{fg:smallalpha}
\end{figure}

The second simulation is to evaluate the robustness of the locally optimal design to the choice of $\rho_0$. 
We fix $\alpha=0.001$ and the network density be 0.08. 
As discussed in Section \ref{sec:BayOptD}, it is expected that the difference between the locally optimal design and the true optimal design is small. 
Figure \ref{fg:smallrho} confirms this. 
It shows the boxplots of the differences between $\text{PIP}$ of the two designs for the same data. 
Each boxplot is based on 10 replications.
According to Figure \ref{fg:smallrho}, the differences are mostly under 3\%. 
Theoretically, the PIP based on $\rho_t$ should be strictly larger than the ones based on $\rho_0$ if the designs are solutions to the original optimization \eqref{eq:opt_bayesian}. 
But the solutions in Figure \ref{fg:smallrho} are based on the hybrid approach, so the PIP based on $\rho_t$ can be sometimes smaller than the PIP based on $\rho_0$. 
Since the locally optimal design with different $\rho_0$ performs similarly to the true optimal design with $\rho_t$ in terms of PIP, we use the locally optimal design with $\rho_0=0.5$ for the rest of this section.

\begin{figure}
\centering
\includegraphics[scale=0.7]{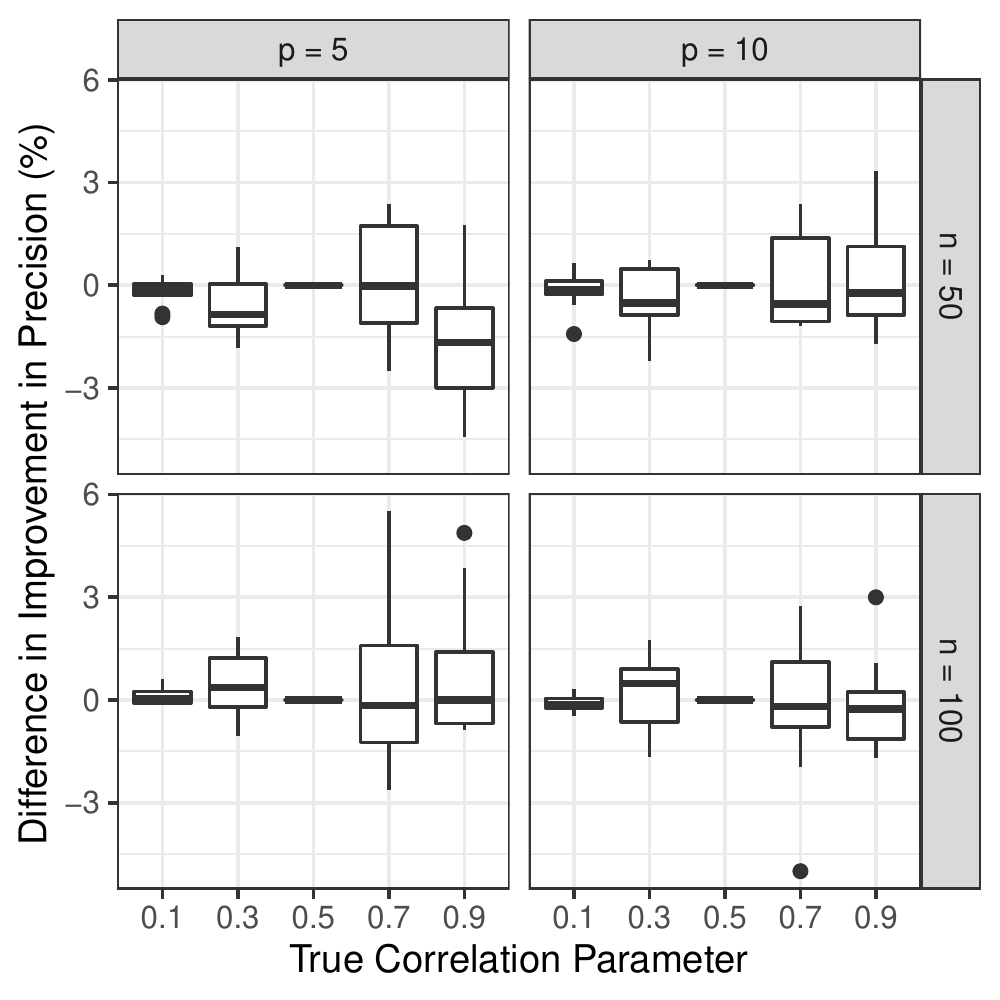}
\caption{The differences of PIP between the locally optimal design with $\rho_0=0.5$ and the true optimal design with $\rho_t$ for $n=50,100$ and $p=5,10$.}\label{fg:smallrho} 
\end{figure}

\subsection{The Advantage of Considering Network Connection}\label{sec:value}

In this subsection, we address the advantage of considering network connection in the design procedure.
First, we comment on the influence of the network on the performance of the proposed optimal design. 
Essentially, the proposed locally optimal design aims to maximize $T(\bm x,\rho_0)$, which is equal to $m-T_1-T_2$. 
Recall that $m$ is twice the total number of edges and it increases as the network expands in the number of nodes $n$ and/or network density. 
Therefore, for large and dense network, $m$ can dominate the objective function $T(\bm x, \rho_0)$. 
Since the PIP \eqref{eq:precIM} is calculated based on $T(\bm x,\rho_0)$, the advantage of the proposed optimal design would appear to be marginal. 
When $n$ is not large and the network connection is sparse, the advantage of the proposed optimal design would be more significant. 
In the following, the locally optimal design is compared with the optimal design in \eqref{eq:opt_no} that does not consider network connections. 
The latter is obtained using Gurobi \citep{gurobi2015gurobi} and the run-time is also limited to 500 seconds, the same as the locally optimal design.

In Figure \ref{fg:smallnetp}, we compare the locally optimal designs with network and the optimal design without network under different network densities.
For each synthetic dataset, we obtain the two different optimal designs (i.e., with and without network connection) and obtain their respective PIP values. 
The boxplots are PIP values for 10 synthetic datasets under the same $\rho_t$, $n$, and network density setting. 
The results indicate that by incorporating the network structure, the proposed locally optimal design significantly outperforms the optimal design without a network connection, and this advantage is more prominent when network density is lower, the network size $n$ is smaller, and the true value of correlation $\rho_t$ is larger.

\begin{figure}
\centering
\includegraphics[scale=0.7]{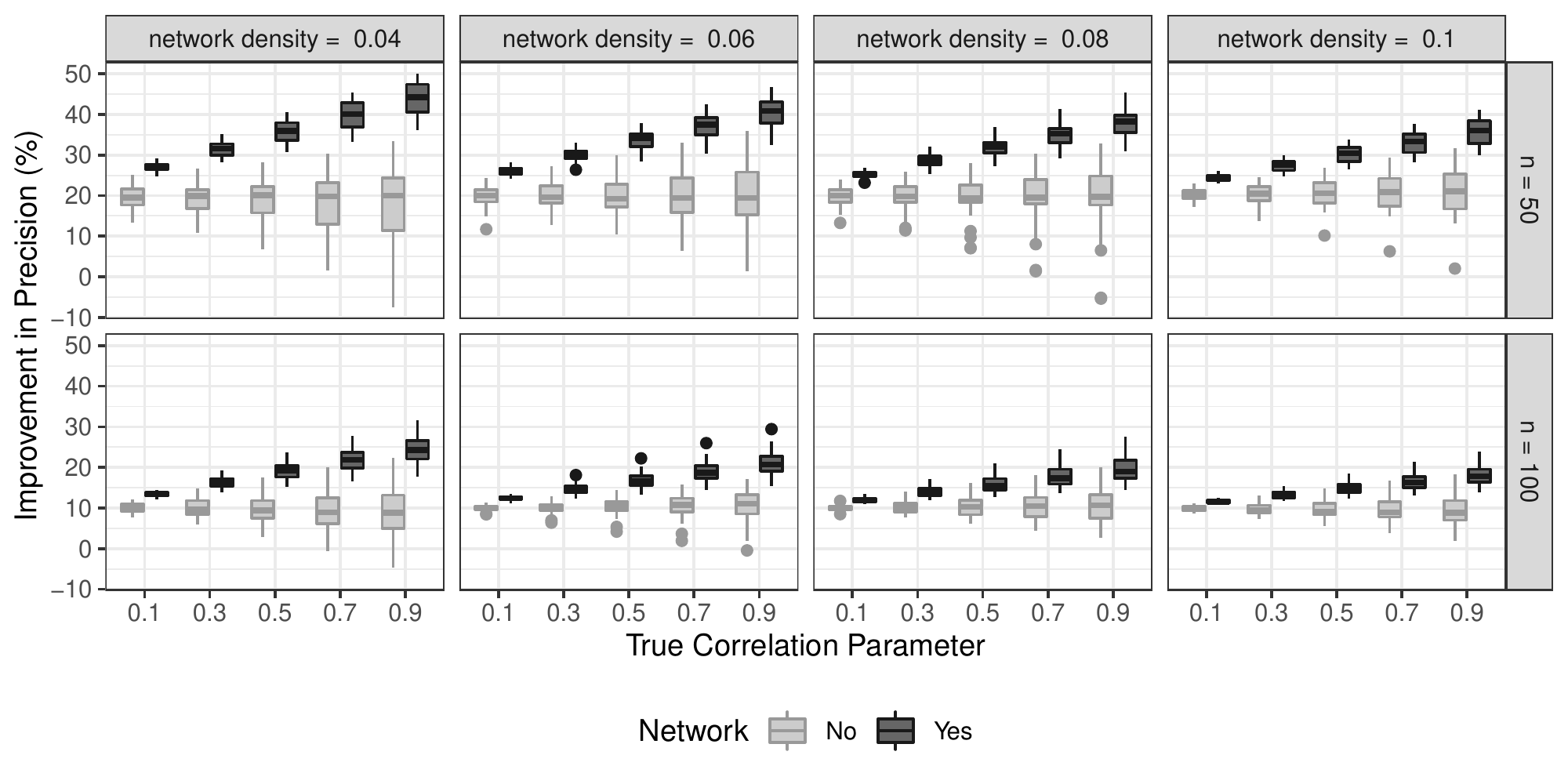}
\caption{The PIP values of optimal designs with and without network with $p=10$.}\label{fg:smallnetp}
\end{figure}

Similarly, in Figure \ref{fg:n}, we expand such comparison to more cases of $n=50, 100, 500, 1000$.
In addition to the PIP in \eqref{eq:precIM}, we also include the improvement in $T_1(\bm x, \rho)$ and $T_2(\bm x, \rho)$ with respect to the expected $T_1$ and $T_2$ of the random balanced designs. 
Although for the locally optimal design with network, PIP value drops from 40\% to 5\% as $n$ increases from 50 to 1000, the improvements in $T_1$ and $T_2$ do not decrease with the network size $n$.  
For instance, using the proposed optimal design criterion for the cases with $n=1000$, $m\approx 20,000$, and $\rho_t=0.1,\ldots, 0.9$, $T_1$ varies from -300 to -20 and $T_2$ varies from 0 to 200.
As discussed earlier, the main reason is that $m$ dominates the percentage of improvement in precision when $n$ is large and/or the network is dense.

\begin{figure}
\centering
\includegraphics[scale=0.7]{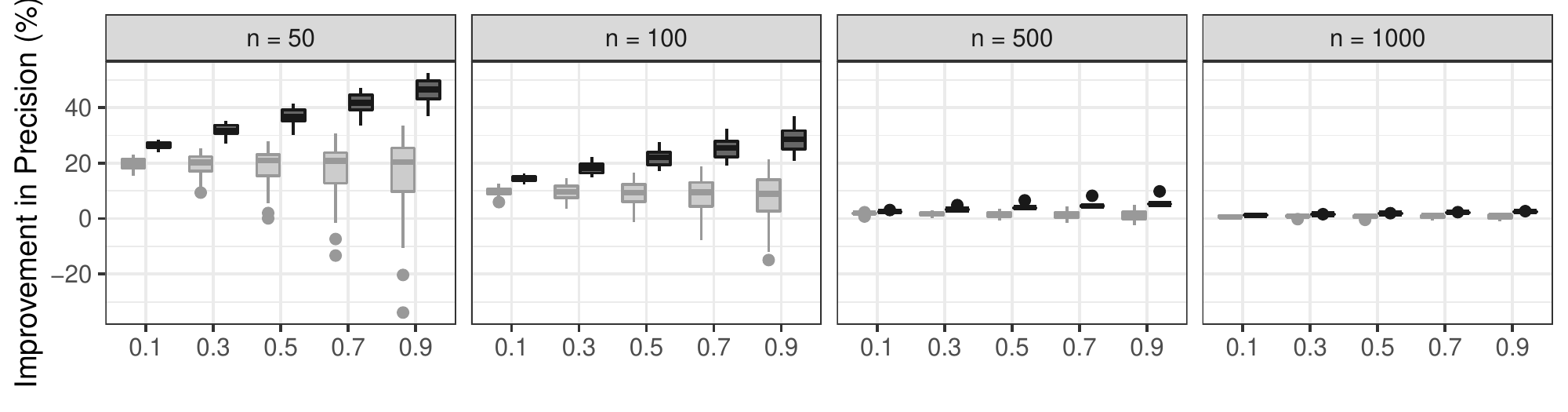}
\includegraphics[scale=0.7]{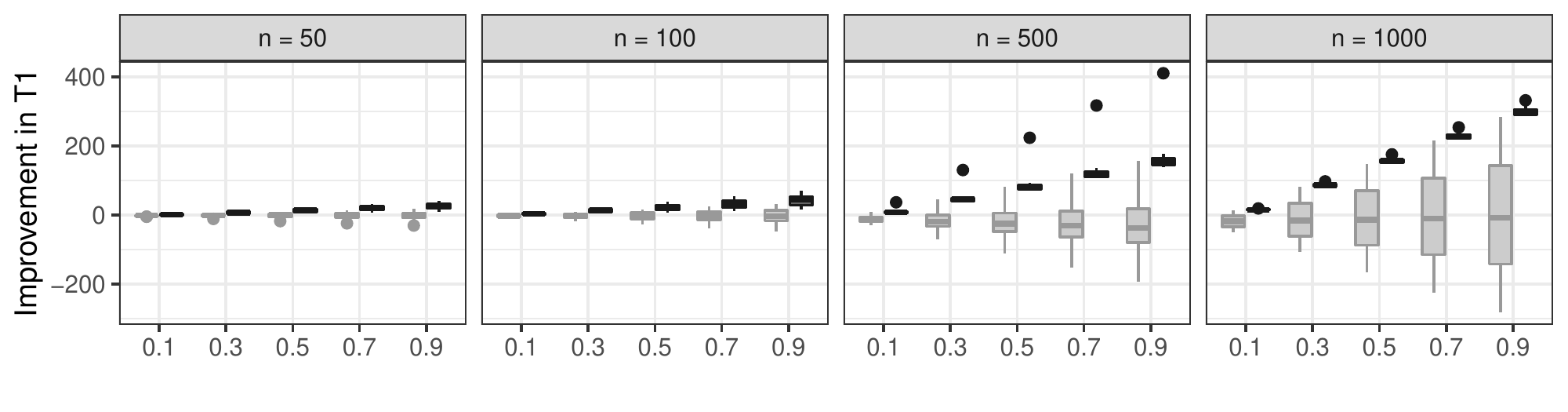}
\includegraphics[scale=0.7]{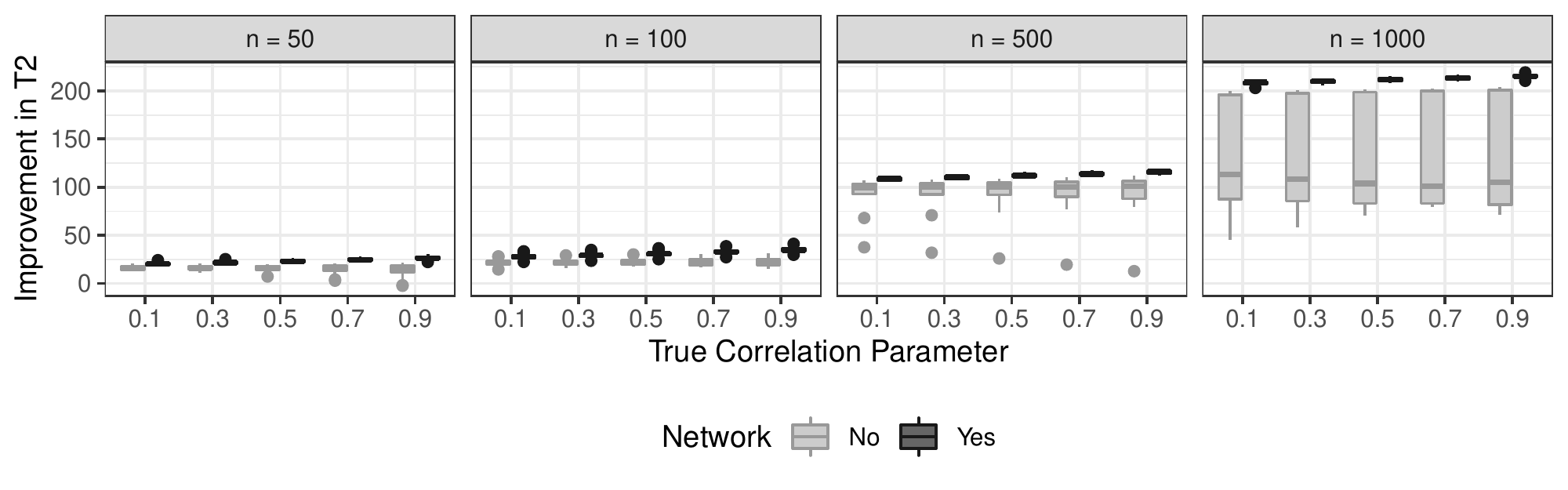}
\caption{The PIPs of the locally optimal design with network and the optimal design without network with $p=10$ and network density 0.02.}\label{fg:n}
\end{figure}

\section{Case Study}\label{sec:real}

The case study is based on a real dataset from \cite{data}, which is collected from the music streaming service (November 2017) with a total number of 47538 users from Hungary. 
The dataset contains the information of friendship networks of the users, as well as their covariates information, representing the users' preference (recorded by 1 or 0) to 84 distinct music genres.
To decide if the update of the music recommender algorithm improves the baseline algorithm, a controlled experiment can be conducted, much similar to the application context given in Section \ref{sec:intro}. 
The outcome of each user can be the total time of the user listening to the recommended music or a more direct metric commonly used by the company. 
The estimation of the treatment effect $\theta$ in \eqref{eq:lm} would reveal which one of the two versions of the recommender algorithm outperforms the other. 
Both the social network of users and their covariates are relevant in assessing different algorithms for music recommender systems. 
The non-interference assumption is proper for this case study since we assume the experiment is conducted without users' awareness. 

To evaluate the performance of the proposed design approaches, we repeatedly randomly sample sub-networks with 2000 and 3000 users from the complete data of 47538 users. 
Among those, around half of the users are isolated from other users (i.e., no network connections at all). 
This number is big due to the subset sampling of the original complete network. 
The CAR model does not work for isolated users, since $m_i$ has to be larger than zero. 
For simplicity, we remove those isolated users and the size of the remaining networks is approximately 1000 or 2000. 
In practice, all the isolated users can still be kept in the experiment and split into two groups via a covariate balancing measure. 
The densities of the resulting sub-networks range from 0.001 to 0.002.
Although the complete data contains 84 distinct genres as the covariates, many of them are linearly dependent. 
Also, because of subset sampling, many covariates of the subsets become constants. 
Thus, we keep the first $20$ covariates to remove the potential singularity issue. 
In the numerical study, we set $p$ from 5 to 20.

We first compute the PIP values given in \eqref{eq:precIM}. 
For the locally optimal design, we set $\alpha=0.001$ and $\rho_0=0.5$. 
The true correlation parameter $\rho_t$ is varied from 0.1 to 0.9. 
We include the optimal design without network connection in \eqref{eq:opt_no} for comparison. 
For each $n$ and $p$, ten subsets are randomly sampled from the complete data. 
The results are shown in Figure \ref{fg:real}. 
The results based on the real data are different from synthetic datasets in many aspects.
For instance, the distributions of covariates and networks are more complex. 
Particularly, we compute the proportions of $1$'s for each covariate, and it ranges in $[0.05, 0.85]$. 
The correlations between different covariates are in $[0.04, 0.97]$, so some of the covariates are highly correlated. 
Still, the results in Figure \ref{fg:real} show a similar pattern to the ones from synthetic networks, which indicates that the proposed approach is effective for real data sets as well, despite the more complicated network structure and covariates distributions.

\begin{figure}
\centering
\includegraphics[scale=0.7]{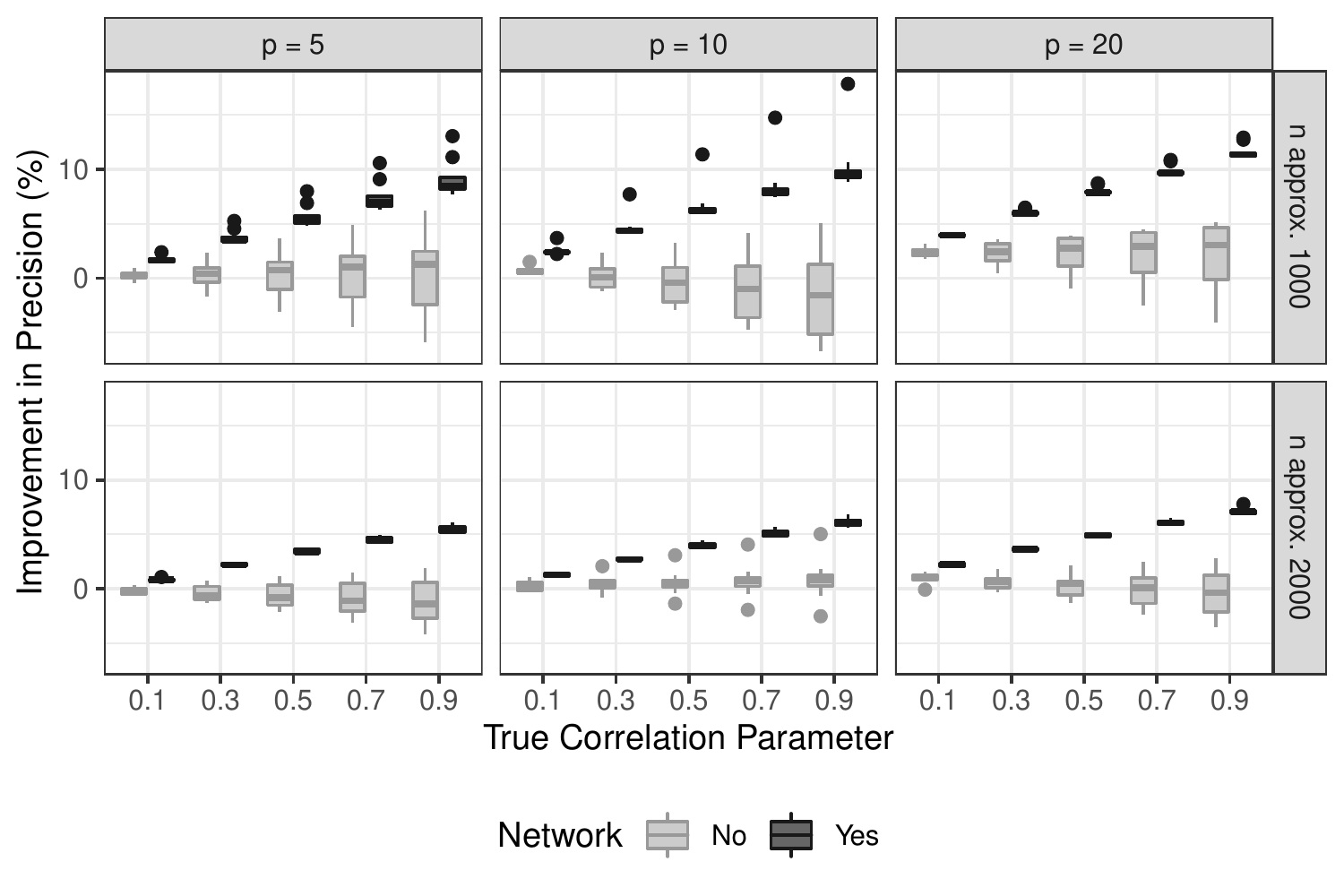}
\caption{Boxplots of percentages of PIP values of two kinds of optimal designs for case study.}\label{fg:real}
\end{figure}

Next, we create a pseudo experiment by simulating the outcome data and then compare the two methods empirically. 
In reality, the network correlation coefficient for different users may not be the same. 
Therefore, to generate the outcome data, we set the covariance matrix of the CAR model to be $\sigma^2(\bm D-\bm P\bm W\bm P)^{-1}$, where $\bm P$ is a diagonal matrix with entries $\sqrt{\rho_1}, \ldots, \sqrt{\rho_n}$.  
The heterogenous correlation coefficients $\rho_1, \ldots, \rho_n$ are sampled from the uniform distribution $U(0, 1)$.
We simulate the outcomes from the CAR model in \eqref{eq:car1} under this covariance structure, and then fit a CAR model with a single unknown correlation coefficient and estimate the treatment effect $\theta$ in \eqref{eq:lm}. 
We sample sub-networks with approximately 1000 users and take the first $p=5$ or 20 covariates. 
The true treatment effect $\theta$ and the variance $\sigma$ are specified to be 1. 
In addition to the locally optimal design and the optimal design without the network, we also generate 10 random balanced designs. 
For each design, we use the simulated outcome to obtain the estimate $\hat\theta$ based on the CAR model. 
Repeating this procedure 100 times, we compute the mean squared errors (MSEs) for each design approach. 
During each of the 100 times of simulation for each sampled sub-network and $p$ covariates, we obtain 12 MSEs values (2 optimal designs and 10 random designs) and then compute the empirical percentiles of the MSEs of two optimal designs respectively from the MSEs of the 10 random designs. 
If the empirical percentile of the MSE from an optimal design is smaller than 0.5, it means that the MSE of the optimal design is superior to more than 50\% of the random designs in terms of reducing the MSE.
Notice that the resulting empirical percentiles vary from different sub-datasets in each simulation. 
For each $p$, we generate 25 random sub-datasets. 
The empirical percentiles of MSEs of the two optimal designs are shown in the boxplots in Figure \ref{fg:realmse} for all the 25 sub-networks and two $p$ values. 
The results show that the optimal design without the network does not outperform the random balanced designs. 
For the proposed locally optimal design with the network, the empirical percentiles are mostly below 0.5, which strongly indicates its advantage over the other two alternatives in terms of reducing MSE.


\begin{figure}[ht]
\centering
\includegraphics[scale=0.7]{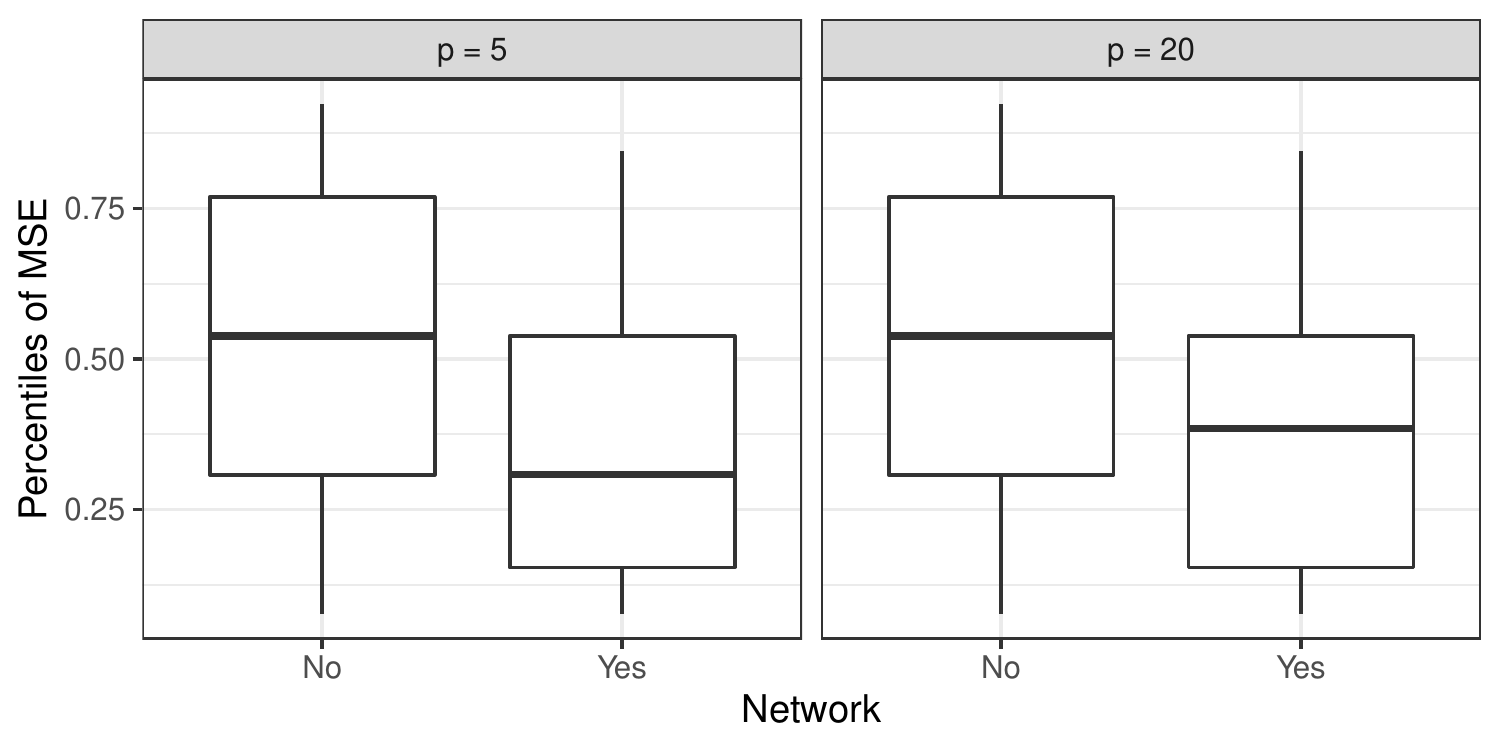}
\caption{The percentiles of MSEs of the two optimal design approaches (i.e., with and without network) based on the MSEs of 10 completely randomized designs.}\label{fg:realmse}
\end{figure}

\section{Conclusion}\label{sec:con}

In this paper, we propose a model-based optimal design approach to include both covariates and network dependence for the experiments of A/B tests.
A linear additive model is used to include the covariates information and the CAR model is used to model the network correlation between test subjects.
A hybrid approach is proposed to solve the optimization problem and construct the locally optimal design.
Both simulation and real data are used to compare the performances of the proposed locally optimal design with the exact optimal design and other alternative approaches.
The proposed design performances reasonably well compared with other approaches in terms of variance reduction to random designs.
The proposed optimal design relies on the CAR linear additive model including both covariates information and network correlation.
Similarly to all optimal design approaches, the validity of the model assumption is crucial.
Although we have shown the proposed design has some degree of robustness to the choice of the correlation parameter, if the experimenter thinks the CAR-based additive model assumption does not apply to the potential data to be collected, we recommend the rerandomization approaches proposed by \cite{morgan2012rerandomization} and \cite{morgan2015rerandomization} or completely randomized design if the sample size is sufficiently large.

We would like to point out a few directions for future research. 
First, this work is limited to the CAR model assumption and the network is much simpler than real social networks. 
But the proposed design approach can be applied to more sophisticated parametric models.
For example, the network can become directional and weighted, which can be specified by the adjacency matrix.
Other than the CAR model, the Spatial Auto-Regressive (SAR) model can be used.
The network correlation parameter $\rho$ can be different for different subjects as in the pseudo experiment we show in Section \ref{sec:real}. 
The CAR model can also be adjusted by changing the variance in \eqref{eq:delta} to handle the isolated nodes, which can exist occasionally in real social networks. 
Second, for the extremely large networks, there may be a time or economic cost to involve as many test subjects as possible.
In this case, the optimal design proposed here can be extended to the optimization problem of simultaneous selection of test subjects and treatment assignment.
Some computational efficient approximation algorithms need to be adapted to solve this problem for large networks.
Third, the proposed design relies on the observed covariates, which might be inaccurate depending on the data source.
To make the design robust to inaccurate covariates, we may incorporate the uncertainty of those covariates, and develop a hierarchy model that can characterize the uncertainty.
How to design treatment allocation under this situation would be an interesting topic. 
Moreover, it is also important to investigate designs when the number of treatment settings is more than two, particularly when the experiment involves multiple factors. 

\section*{Acknowledgments}

The authors thank the Editor, the Associate Editor, and two reviewers for their valuable comments through the reviewing process. 

\section*{Funding}

This research is supported by a U.S. National Science Foundation grant DMS-1916467. 

\section*{Supplementary materials}

The supplementary materials include proofs, derivations, and extra examples. They also include the codes for all the examples. 

\bibliography{Ref.bib}

\newpage

\begin{center}
{\Large\bf Supplement: Proofs, Derivations and Extra Example}\\
Qiong Zhang$^{1}$, Lulu Kang$^{2}$\\
$^{1}$School of Mathematical and Statistical Sciences, Clemson University\\
$^{2}$Department of Applied Mathematics, Illinois Institute of Technology
\end{center}

\setcounter{figure}{0}
\setcounter{table}{0}
\setcounter{lem}{0}
\setcounter{theorem}{0}
\setcounter{proposition}{0}

\makeatletter 
\renewcommand{\thefigure}{S\@arabic\c@figure}
\renewcommand{\thetable}{S\@arabic\c@table}
\renewcommand{\thelem}{S\@arabic\c@lem}
\renewcommand{\theproposition}{S\@arabic\c@proposition}
\renewcommand{\thetheorem}{S\@arabic\c@theorem}
\makeatother

\section*{S1. Proof of Theorem \ref{thm:concave}}
\begin{proof}
Because $T(\bm x, \rho)=m-T_1(\bm x, \rho)-T_2(\bm x, \rho)$,
we investigate the derivatives $T_1$ and $T_2$ with respect to $\rho$. For $T_1$,
\begin{align*}
T_1(\bm x, \rho)&=\rho \bm x^\top \bm W\bm x,\\
\frac{\partial T_1(\bm x, \rho)}{\partial \rho} &= \bm x^\top \bm W\bm x, \quad \frac{\partial^2 T_1(\bm x,\rho)}{\partial \rho^2} = 0.
\end{align*}

To derive the derivatives for $T_2$, we first introduce some notation to shorten the formulas.
Let $\bm A:=\bm F^\top (\bm D-\rho \bm W)F$, $\bm A_1:=\frac{\partial \bm A^{-1}}{\partial \rho}$, and $\bm A_2:=\frac{\partial^2 \bm A^{-1}}{\partial \rho^2}$.
Following the calculus of matrix,
\begin{align*}
\bm A_1&=-\bm A^{-1}\frac{\partial \bm A}{\partial \rho}\bm A^{-1}=-\bm A^{-1}\frac{\partial \bm F^\top \bm D\bm F-\rho \bm F^\top\bm W\bm F}{\partial \rho}\bm A^{-1}=\bm A^{-1}\bm F^\top \bm W\bm F\bm A^{-1}, \\
\bm A_2&=\frac{\partial \bm A_1}{\partial \rho}=\frac{\partial \bm A^{-1}}{\partial \rho}\bm F^\top \bm W\bm F \bm A^{-1}+\bm A^{-1}\bm F^\top \bm W\bm F\frac{\partial \bm A^{-1}}{\partial \rho}=2\bm A^{-1}\bm F^\top \bm W\bm F\bm A^{-1}\bm F^\top\bm W\bm F\bm A^{-1}.
\end{align*}
Using the new notation,
\begin{align*}
T_2(\bm x, \rho)&=\bm x^\top (\bm D-\rho \bm W)\bm F \bm A^{-1} \bm F^\top (\bm D-\rho \bm W)\bm x\\
&=\underbrace{\bm x^\top \bm D\bm F\bm A^{-1}\bm F^\top\bm D\bm x}_{\text{Term 1}}-2\underbrace{\rho \bm x^\top \bm W\bm F\bm A^{-1}\bm F^\top\bm D\bm x}_{\text{Term 2}}+\underbrace{\rho^2\bm x^\top \bm W\bm F\bm A^{-1}\bm F^\top \bm W\bm x}_{\text{Term 3}}.
\end{align*}
The first order derivative of the three terms with respect to $\rho$ are
\begin{align*}
\frac{\partial \text{Term 1}}{\partial \rho}&=\bm x^\top \bm D\bm F\frac{\partial \bm A^{-1}}{\partial \rho}\bm F^\top \bm D\bm x=\bm x^\top\bm D\bm F\bm A_1F^\top \bm D\bm x\\
\frac{\partial \text{Term 2}}{\partial \rho}&=\bm x^\top \bm W\bm F\bm A^{-1}\bm F^\top\bm D\bm x+\rho \bm x^\top\bm W\bm F\frac{\partial \bm A^{-1}}{\partial \rho}\bm F^\top \bm D\bm x\\
&=\bm x^\top \bm W\bm F\bm A^{-1}\bm F^\top\bm D\bm x+\rho \bm x^\top\bm W\bm F\bm A_1\bm F^\top\bm D\bm x\\
\frac{\partial \text{Term 3}}{\partial \rho}&=2\rho \bm x^\top \bm W\bm F\bm A^{-1}\bm F^\top\bm W\bm x+\rho^2\bm x^\top\bm W\bm F\frac{\partial \bm A^{-1}}{\partial \rho}\bm F^\top \bm W\bm x\\
&=2\rho \bm x^\top \bm W\bm F\bm A^{-1}\bm F^\top\bm W\bm x+\rho^2\bm x^\top\bm W\bm F\bm A_1\bm F^\top\bm W\bm x.
\end{align*}
To combine the three derivatives,
\[
\frac{\partial T_2(\bm x, \rho)}{\partial \rho}=\bm x^\top (\bm D-\rho\bm W)\bm F\bm A_1\bm F^\top (\bm D-\rho\bm W)\bm x-2\bm x^\top\bm W\bm F\bm A^{-1}\bm F^\top (\bm D-\rho\bm W)\bm x
\]
The derivative of $T(\bm x, \rho)$ is,
\begin{align*}
\frac{\partial T(\bm x,\rho)}{\partial \rho} &= -\bm x^\top \bm W\bm x-\bm x^\top (\bm D-\rho \bm W)\bm F\bm A_1\bm F^\top (\bm D-\rho \bm W)\bm x+2\bm x^\top \bm W\bm F\bm A^{-1}\bm F^\top (\bm D-\rho W)\bm x\\
&=-\bm x^\top \bm W\left[\bm I_n-\bm F\bm A^{-1}\bm F^\top (\bm D-\rho \bm W)\right]\bm x\\
&-\bm x^\top \left[(\bm D-\rho \bm W)\bm F\bm A^{-1}\bm F^\top -\bm I_n\right]\bm W\bm F\bm A^{-1}\bm F^\top (\bm D-\rho \bm W)\bm x\\
&=-\bm x^\top \left[\bm I_n-\bm F\bm A^{-1}\bm F^\top(\bm D-\rho \bm W)\right]^\top \bm W\left[\bm I_n-\bm F\bm A^{-1}\bm F^\top (\bm D-\rho \bm W)\right]\bm x.
\end{align*}
It is interesting to notice that $\bm s:=\left[\bm I_n-\bm F\bm A^{-1}\bm F^\top (\bm D-\rho \bm W)\right]\bm x$ can be considered as the residuals of regression model $\bm x=\bm F\bm \beta+\bm v$, where $\bm v$ is the vector with mean equal to $\bf 0$ and covariance matrix $\bm D-\rho \bm W$.
By the definition of the adjacency matrix,
\[
\frac{\partial T(\bm x, \rho)}{\partial \rho}=-\sum_{w_{i,j}=1}s_is_j.
\]
Thus, the sign of $\partial T(\bm x,\rho)/\partial \rho$ is uncertain and is possible to be either positive or negative.

Next, we compute the second order derivative of $T_2(\bm x,\rho)$ with respect to $\rho$.
\begin{align*}
\frac{\partial^2 \text{Term 1}}{\partial \rho^2} &= \bm x^\top \bm D\bm F \frac{\partial \bm A_1}{\partial \rho}\bm F^\top \bm D \bm x=\bm x^\top \bm D\bm F\bm A_2\bm F^\top \bm D \bm x,\\
\frac{\partial^2 \text{Term 2}}{\partial \rho^2} &= \bm x^\top \bm W \bm F \frac{\partial \bm A^{-1}}{\partial \rho}\bm F^\top \bm D\bm x+\bm x^\top \bm W\bm F\bm A_1F^\top \bm D\bm x+\rho \bm x^\top\bm W\bm F\frac{\partial \bm A_1}{\partial \rho}\bm F^\top \bm D\bm x\\
&=2\bm x^\top \bm W \bm F \bm A_1\bm F^\top \bm D\bm x+\rho \bm x^\top \bm W\bm F\bm A_2\bm F^\top \bm D\bm x,\\
\frac{\partial^2 \text{Term 3}}{\partial \rho^2}&=2\bm x^\top \bm W\bm F\bm A^{-1}\bm F^\top \bm W\bm x+2\rho\bm x^\top\bm W\bm F\frac{\partial \bm A^{-1}}{\partial \rho}\bm F^\top\bm W\bm x+2\rho \bm x^\top \bm W\bm F\bm A_1\bm F^\top \bm W\bm x\\
&+\rho^2\bm x^\top \bm W\bm F\frac{\partial \bm A_1}{\partial \rho}\bm F^\top \bm W\bm x\\
&=2\bm x^\top \bm W\bm F\bm A^{-1}\bm F^\top\bm W\bm x+4\rho \bm x^\top\bm W\bm F\bm A_1\bm F^\top\bm W\bm x+\rho^2\bm x^\top\bm W\bm F\bm A_2\bm F^\top\bm W\bm x.
\end{align*}
Let $\bm C:=\bm F\bm A^{-1}\bm F^\top\bm W\bm F$. Then
\begin{align*}
\frac{\partial^2 T_2(\bm x,\rho)}{\partial \rho^2}&=2\bm x^\top (\bm D-\rho \bm W)\bm C\bm A^{-1}\bm C^\top (\bm D-\rho\bm W)\bm x-4\bm x^\top\bm W\bm F\bm A^{-1}\bm C^\top (\bm D-\rho \bm W)\bm x\\
&+2\bm x^\top \bm W\bm F\bm A^{-1}\bm F^\top\bm W\bm x\\
&=2\bm x^\top \left[(\bm D-\rho\bm W)\bm C-\bm W\bm F\right]\bm A^{-1}\left[\bm C^\top (\bm D-\rho \bm W)-\bm F^\top\bm W\right]\bm x.
\end{align*}
For any $\rho\in (0, 1)$, it is apparent that $\bm D-\rho \bm W$ is the Laplacian matrix of the weighted undirected graph with the constant weight $\rho$ for each edge, and it is also clear that $\bm D-\rho\bm W$ is a positive definite matrix.
We assume $\bm F$ is a full rank matrix so that the regression model is valid.
So $\bm A$ and $\bm A^{-1}$ are both positive definite.
Thus, $\frac{\partial^2 T_2(\bm x,\rho)}{\partial \rho^2}\geq 0$ and $\frac{\partial^2 T(\bm x,\rho)}{\partial \rho^2}\leq 0$ for any $\rho\in (0,1)$.
The design criterion $T(\bm x, \rho)$, which is to be maximized, is concave.
\end{proof}

\section*{S2. The Gap between $\mathbb{E}[T(\bm x, \rho)]$ and $T(\bm x, \rho_0)$}

We randomly generate a network of size $n=50$.
For each pair of nodes, an edge will connect the two with a probability of $1/4$ and the existence of the edge is independent of any other random variables.
The covariate $z_i$ is generated from a one-dimensional normal distribution $N(0, 10^2)$ and $z_i$'s are independent of each other and the network structure.
The prior distribution of $\rho$ is uniform distribution in $[0,1]$ and $\rho_0=\mathbb{E}(\rho)=1/2$.
We randomly generate 400 completely randomized designs $\bm x_l$ for $l=1,\ldots, 400$ and calculate $T(\bm x_l, \rho_0)$, whose histogram is plotted in the left panel of Figure \ref{fg:gap}.
For any given design $\bm x_l$, we randomly samples $\rho_i$ for $i=1,\ldots, 200$ and calculate $T(\bm x_l,\rho_i)$.
The mean $\mathbb{E}[T(\bm x_l,\rho)]$ is approximated by the sample mean of $T(\bm x_l,\rho_i)$'s.
The histogram of the gap $T(\bm x_l,\rho_0)-\mathbb{E}[T(\bm x_l,\rho)]$ for all the random designs is plotted in the right panel of Figure \ref{fg:gap}.
Based on the two histograms, the gap $T(\bm x,\rho_0)-\mathbb{E}[T(\bm x,\rho)]$ is relatively small compared to the range of $T(\bm x,\rho_0)$.
Thus, it is reasonable to use the surrogate local design criterion $T(\bm x_l,\rho_0)$ to replace $\mathbb{E}[T(\bm x, \rho)]$ for this simple example.

\begin{figure}[ht]
\centering
\includegraphics[width=\textwidth]{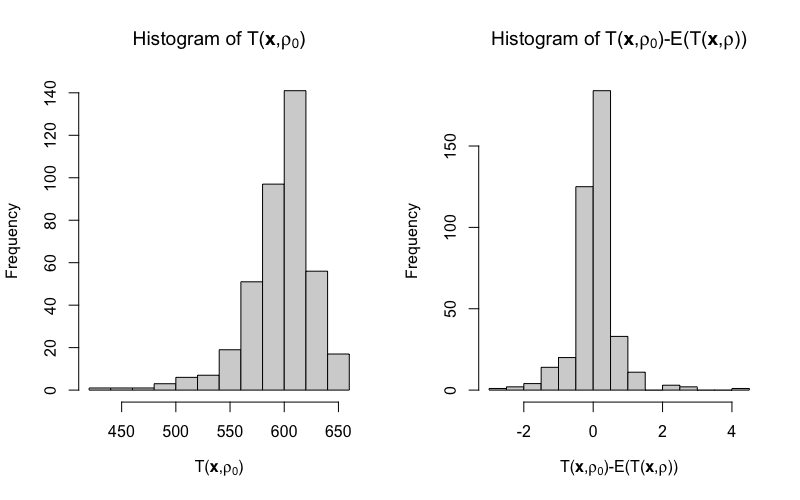}
\caption{Histogram of $T(\bm x,\rho_0)$ and the gap $T(\bm x,\rho_0)-\mathbb{E}[T(\bm x,\rho)]$}\label{fg:gap}
\end{figure}

In more general case, Proposition \ref{prop:gap} provides the analytic gap between $T(\bm x, \rho_0)$ and $\mathbb{E}[T(\bm x, \rho)]$.
Its proof is provided in the Supplement.
Proposition \ref{prop:gap} also provides two different upper bounds of the gap.
Which one of the two upper bounds is larger depends on the adjacency matrix $\bm W$ and $\rho_0$.
Regrettably, since both the upper bounds are independent of the design $\bm x$, they are too large to have any practical guidance, even though they might still be attainable for certain extreme design $\bm x$.
For the above simulation example, since the skewness of uniform distribution is 0, the two upper bounds of \eqref{eq:gap-upper} and \eqref{eq:gap-upper2} are calculated as 902.4 and 650.1, respectively.
They are much larger than the range shown in the histogram in Figure \ref{fg:gap}.
On the other hand, the two upper bounds increase as the size and density of the network become larger.
Therefore, for large and dense networks we should be more careful applying the locally optimal design.

\begin{proposition}\label{prop:gap}
The difference between $T(\bm x, \rho_0)$ and $\mathbb{E}(T(\bm x, \rho))$ is
\begin{equation}\label{eq:gap}
T(\bm x,\rho_0)-\mathbb{E}(T(\bm x,\rho))=\frac{1}{2}\left.\frac{\partial^2 T_2(\bm x, \rho)}{\partial \rho^2}\right\vert_{\rho=\rho_0}\var(\rho)-\mathbb{E}(O(\rho-\rho_0)^3),
\end{equation}
where
\begin{align}
\label{eq:deriv-2nd}
\frac{1}{2}\left.\frac{\partial^2 T_2(\bm x, \rho)}{\partial \rho^2}\right\vert_{\rho=\rho_0}&=\bm s^\top \bm W\bm F\left[\bm F^\top (\bm D-\rho_0 \bm W)\bm F\right]^{-1}\bm F^\top \bm W\bm s,\\
\label{eq:z}
\text{and}\quad \bm s&:=\left[\bm I_n-\bm F (\bm F^\top(\bm D-\rho_0\bm W)\bm F)^{-1}\bm F^\top (\bm D-\rho_0\bm W)\right]\bm x.
\end{align}
An upper bound of the gap $T(\bm x,\rho_0)-\mathbb{E}(T(\bm x,\rho))$ is
\begin{equation}\label{eq:gap-upper}
T(\bm x,\rho_0)-\mathbb{E}(T(\bm x,\rho))\leq \min\left\{n\lambda_{\max}(\bm D-\rho_0\bm W), (1+\rho_0)m\right\}\frac{|\lambda(\bm W)|_{\max}^2\var(\rho)}{\lambda_{\min}^2(\bm D-\rho_0\bm W)}-\mathbb{E}\left[O(\rho-\rho_0)^3\right],
\end{equation}
where $\lambda_{\min}(\bm D-\rho_0\bm W)$ and $\lambda_{\max}(\bm D-\rho_0\bm W)$ are the minimum and maximum eigenvalues of the Laplacian matrix $\bm D-\rho_0\bm W$, which is positive definite for $\rho_0\in (0, 1)$, $|\lambda(\bm W)|_{\max}$ is the spectrum radius of $\bm W$, and $m=\sum_{i=1}^n m_i$.
Based on Theorem \ref{thm:t1}, an alternative upper bound \eqref{eq:gap-upper2} holds asymptotically with probability of $100(1-\alpha)\%$ and $\alpha \in (0,1)$,
\begin{equation}\label{eq:gap-upper2}
T(\bm x,\rho_0)-\mathbb{E}(T(\bm x,\rho))\leq (m+z_{\alpha}\sqrt{m})\frac{|\lambda(\bm W)|_{\max}^2\var(\rho)}{\lambda_{\min}^2(\bm D-\rho_0\bm W)}-\mathbb{E}\left[O(\rho-\rho_0)^3\right],
\end{equation}
where $z_{\alpha}=\Phi^{-1}(\alpha)$ is the upper $\alpha$ quantile of the standard normal distribution.
\end{proposition}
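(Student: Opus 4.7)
The proof naturally splits into three parts: the identity \eqref{eq:gap}, the explicit formula \eqref{eq:deriv-2nd}, and the two upper bounds. For the identity, the plan is to perform a second-order Taylor expansion of $T(\bm x, \rho)$ around $\rho_0 = \mathbb{E}(\rho)$. Because $T_1(\bm x, \rho) = \rho \bm x^\top \bm W \bm x$ is linear in $\rho$, only $T_2$ contributes to the second-order term, and the sign convention $T = m - T_1 - T_2$ flips its curvature. Taking expectations and using $\mathbb{E}(\rho - \rho_0) = 0$ and $\mathbb{E}(\rho - \rho_0)^2 = \var(\rho)$ annihilates the first-order term and leaves precisely the claimed identity, with the cubic remainder absorbed into $\mathbb{E}[O(\rho-\rho_0)^3]$.

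For the explicit formula \eqref{eq:deriv-2nd}, I would start from the expression already derived in the proof of Theorem \ref{thm:concave}, namely $\partial^2 T_2/\partial \rho^2 = 2\bm x^\top [(\bm D - \rho\bm W)\bm C - \bm W\bm F]\bm A^{-1}[\bm C^\top (\bm D - \rho\bm W) - \bm F^\top \bm W]\bm x$ with $\bm C = \bm F\bm A^{-1}\bm F^\top \bm W \bm F$ and $\bm A = \bm F^\top(\bm D - \rho \bm W)\bm F$. The key algebraic step is to factor $\bm F^\top \bm W$ out of the right bracket: $\bm C^\top(\bm D - \rho \bm W) - \bm F^\top \bm W = \bm F^\top \bm W[\bm F \bm A^{-1}\bm F^\top(\bm D - \rho \bm W) - \bm I_n]$. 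Applied to $\bm x$, this becomes $-\bm F^\top \bm W \bm s$ with $\bm s$ as in \eqref{eq:z}. By transposition, the left bracket multiplied from the left by $\bm x^\top$ gives $-\bm s^\top \bm W \bm F$ (all relevant matrices are symmetric), yielding the clean quadratic form $2\bm s^\top \bm W \bm F \bm A^{-1} \bm F^\top \bm W \bm s$ and, upon dividing by two, equation \eqref{eq:deriv-2nd}.

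For the first upper bound \eqref{eq:gap-upper}, I would bound $\bm s^\top \bm W \bm F \bm A^{-1}\bm F^\top \bm W \bm s$ via a chain of Loewner-order inequalities. Observing that $(\bm D - \rho_0 \bm W)^{1/2} \bm F \bm A^{-1}\bm F^\top (\bm D - \rho_0 \bm W)^{1/2}$ is an orthogonal projection of rank $p+1$ and hence $\preceq \bm I_n$ produces $\bm F \bm A^{-1}\bm F^\top \preceq (\bm D - \rho_0 \bm W)^{-1} \preceq \lambda_{\min}^{-1}(\bm D - \rho_0 \bm W) \bm I_n$. Combining this with $\|\bm W \bm s\|^2 \leq |\lambda(\bm W)|_{\max}^2 \|\bm s\|^2$ introduces one factor of $\lambda_{\min}(\bm D - \rho_0 \bm W)$ in the denominator. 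A second factor arises from bounding $\|\bm s\|^2 \leq \bm s^\top (\bm D - \rho_0 \bm W) \bm s/\lambda_{\min}(\bm D - \rho_0 \bm W)$ and then noting that $\bm M := \bm F \bm A^{-1} \bm F^\top (\bm D - \rho_0 \bm W)$ is idempotent and self-adjoint in the $(\bm D - \rho_0 \bm W)$-weighted inner product, so $\bm I_n - \bm M$ is a contraction there, which yields $\bm s^\top (\bm D - \rho_0 \bm W) \bm s \leq \bm x^\top (\bm D - \rho_0 \bm W) \bm x$. The two scalar bounds on $\bm x^\top (\bm D - \rho_0 \bm W) \bm x$ are immediate: one uses $\|\bm x\|^2 = n$ and the largest eigenvalue, and the other writes the quadratic form as $m - \rho_0 \bm x^\top \bm W \bm x$ and invokes the trivial deterministic bound $|\bm x^\top \bm W \bm x| \leq m$. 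Taking the minimum and multiplying by $\var(\rho)$ gives \eqref{eq:gap-upper}. The second upper bound \eqref{eq:gap-upper2} follows the same chain, except the crude estimate $|\bm x^\top \bm W \bm x| \leq m$ is replaced by the asymptotic Gaussian tail estimate from Theorem \ref{thm:t1}, which with probability $1 - \alpha$ yields $\bm x^\top \bm W \bm x \geq -z_\alpha \sqrt{m}$ and hence $\bm x^\top (\bm D - \rho_0 \bm W) \bm x \leq m + z_\alpha \sqrt{m}$ (using $0 \leq \rho_0 < 1$).

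The main obstacle will be justifying the two Loewner-order facts that produce the $\lambda_{\min}^2(\bm D - \rho_0 \bm W)$ in the denominator: the identity $\bm F \bm A^{-1}\bm F^\top \preceq (\bm D - \rho_0 \bm W)^{-1}$ established via the $(\bm D - \rho_0 \bm W)^{1/2}$ similarity that turns $\bm M$ into an orthogonal projection, and the weighted-contraction property $\bm s^\top (\bm D - \rho_0 \bm W) \bm s \leq \bm x^\top (\bm D - \rho_0 \bm W) \bm x$. Once these two facts are in place, the remaining steps are routine spectral bookkeeping, and the skewness remainder $\mathbb{E}[O(\rho - \rho_0)^3]$ is simply carried through unchanged from the Taylor expansion to both upper bounds.
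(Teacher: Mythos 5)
Your proposal is correct and follows essentially the same route as the paper's proof: the second-order Taylor expansion about $\rho_0$ with the vanishing first moment, the reduction of the second derivative from the proof of Theorem \ref{thm:concave} to the quadratic form in $\bm s$, the projection argument $(\bm D-\rho_0\bm W)^{1/2}\bm F\bm A^{-1}\bm F^\top(\bm D-\rho_0\bm W)^{1/2}\preceq \bm I_n$ combined with Rayleigh-quotient bounds to get the $\lambda_{\min}^2$ denominator, the two scalar bounds $n\lambda_{\max}$ and $(1+\rho_0)m$ on $\bm x^\top(\bm D-\rho_0\bm W)\bm x$, and the replacement of the crude bound by the asymptotic tail estimate from Theorem \ref{thm:t1} for \eqref{eq:gap-upper2}. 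The only cosmetic difference is that you make explicit the factorization $\bm C^\top(\bm D-\rho\bm W)-\bm F^\top\bm W = -\bm F^\top\bm W[\bm I_n-\bm F\bm A^{-1}\bm F^\top(\bm D-\rho\bm W)]$, which the paper leaves implicit by citing the earlier proof.
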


\begin{lem}\label{lem:quad_eq1}
Let $\bm A$ be an $n\times n$ real symmetric positive definite matrix. For any vector $\bm x\in \mathbb{R}^n$, $\lambda_{\min}(\bm A)||\bm x||_2^2\leq \bm x^\top\bm A \bm x \leq \lambda_{\max}(\bm A)||\bm x||_2^2$. The equality holds if $\bm x={\bf 0}$ or $\bm A=a\bm I_n$ for $a\geq 0$.
\end{lem}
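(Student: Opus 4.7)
The plan is to reduce the two-sided bound to a scalar inequality via spectral decomposition, which is the standard Rayleigh--Ritz argument. First I would invoke the spectral theorem: since $\bm A$ is real, symmetric, and positive definite, there exist an orthogonal matrix $\bm Q$ and a diagonal matrix $\bm \Lambda = \diag(\lambda_1,\ldots,\lambda_n)$ of (positive) eigenvalues such that $\bm A = \bm Q \bm \Lambda \bm Q^\top$. I would order the eigenvalues so that $\lambda_{\min}(\bm A)=\lambda_1 \leq \cdots \leq \lambda_n = \lambda_{\max}(\bm A)$.

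Next, I would introduce the orthogonal change of variables $\bm y = \bm Q^\top \bm x$. Because $\bm Q$ is orthogonal, $||\bm y||_2^2 = \bm x^\top \bm Q \bm Q^\top \bm x = ||\bm x||_2^2$, so the quadratic form collapses to $\bm x^\top \bm A \bm x = \bm y^\top \bm \Lambda \bm y = \sum_{i=1}^n \lambda_i y_i^2$. Bounding each $\lambda_i$ from below by $\lambda_{\min}(\bm A)$ and from above by $\lambda_{\max}(\bm A)$ then yields
\[
\lambda_{\min}(\bm A) ||\bm x||_2^2 \;=\; \lambda_{\min}(\bm A)\sum_{i=1}^n y_i^2 \;\leq\; \sum_{i=1}^n \lambda_i y_i^2 \;\leq\; \lambda_{\max}(\bm A)\sum_{i=1}^n y_i^2 \;=\; \lambda_{\max}(\bm A) ||\bm x||_2^2,
\]
which is the desired two-sided bound.

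For the equality clause I would simply verify the two stated sufficient conditions directly: if $\bm x = {\bf 0}$, all three expressions vanish; if $\bm A = a \bm I_n$ with $a \geq 0$, then $\lambda_{\min}(\bm A) = \lambda_{\max}(\bm A) = a$ and $\bm x^\top \bm A \bm x = a ||\bm x||_2^2$, so both inequalities become identities. Candidly, there is no substantive obstacle here; the lemma is a textbook application of the spectral theorem. The only point requiring care is that the change of variable must be norm-preserving, which is why the orthogonality of $\bm Q$ (rather than mere invertibility) is essential.
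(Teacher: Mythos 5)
Your proof is correct and follows essentially the same route as the paper's: an orthogonal eigendecomposition $\bm A = \bm Q \bm\Lambda \bm Q^\top$, the norm-preserving substitution $\bm y = \bm Q^\top \bm x$, and termwise bounding of $\sum_i \lambda_i y_i^2$. The only (harmless) difference is that you explicitly verify the stated equality conditions, which the paper leaves implicit.
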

\begin{proof}
Because $\bm A$ is a real symmetric positive definite matrix, via eigendecomposition, $\bm A=\bm Q\bm \Lambda \bm Q^{-1}$, where $\bm \Lambda=\diag\{\lambda_1,\ldots, \lambda_n\}$ is a diagonal matrix of the eigenvalues of $\bm A$, $\bm Q$ is the square $n\times n$ matrix whose $i$th column is the eigenvector corresponding to eigenvalue $\lambda_i$.
Also, $\bm Q^\top =\bm Q^{-1}$.
Denote $\bm l:=\bm Q^\top \bm x$.
\begin{align*}
&\bm x^\top \bm A\bm x=\bm x^\top \bm Q\bm \Lambda \bm Q^\top \bm x=\bm l^\top \bm \Lambda \bm l=\sum_{i=1}^n \lambda_il_i^2, \\
\lambda_{\min}(\bm A)||\bm l||_2^2&=\lambda_{\min} (\bm A)\sum_{i=1}^n l_i^2 \leq \sum_{i=1}^n \lambda_i l_i^2\leq \lambda_{\max}(\bm A)\sum_{i=1}^n l_i^2=\lambda_{\max}(\bm A)||\bm l||_2^2.
\end{align*}
Here $\lambda_{\max}(\bm A)$ and $\lambda_{\min}(\bm A)$ are the maximum and minimum eigenvalues of $\bm A$, and since $\bm A$ is positive definite, $\lambda_{\min}(\bm A)>0$.
The norm $||\cdot||_2$ is the $l_2$-norm of a vector, and $||\bm l||_2^2=\bm l^\top \bm l=\bm x^\top \bm Q\bm Q^\top \bm x=||\bm x||_2^2$.
Thus the lemma is proved.
\end{proof}

\begin{lem}\label{lem:quad_eq2}
Let $\bm A$ be an $n\times n$ real symmetric matrix. For any vector $\bm x\in \mathbb{R}^n$, $|\bm x^\top \bm A \bm x| \leq |\lambda(\bm A)|_{\max}||\bm x||_2^2$.
\end{lem}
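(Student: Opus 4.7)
The plan is to mimic the strategy used in Lemma \ref{lem:quad_eq1}, leveraging the spectral theorem for real symmetric matrices, and then inserting absolute values and the triangle inequality to accommodate eigenvalues of either sign. Since $\bm A$ is real symmetric (but not assumed positive definite), it admits an orthogonal eigendecomposition $\bm A = \bm Q \bm \Lambda \bm Q^\top$ with $\bm \Lambda = \diag\{\lambda_1, \ldots, \lambda_n\}$ collecting the real eigenvalues and $\bm Q$ satisfying $\bm Q^\top \bm Q = \bm I_n$.

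First, I would set $\bm l := \bm Q^\top \bm x$, so that $\bm x^\top \bm A \bm x = \bm l^\top \bm \Lambda \bm l = \sum_{i=1}^n \lambda_i l_i^2$. The orthogonality of $\bm Q$ gives $\|\bm l\|_2^2 = \bm x^\top \bm Q \bm Q^\top \bm x = \|\bm x\|_2^2$, which is the same isometry step that powered the previous lemma.

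Next, I would apply the triangle inequality to the quadratic form and bound each eigenvalue by its modulus:
\[
\bigl| \bm x^\top \bm A \bm x \bigr| = \biggl| \sum_{i=1}^n \lambda_i l_i^2 \biggr| \le \sum_{i=1}^n |\lambda_i|\, l_i^2 \le |\lambda(\bm A)|_{\max} \sum_{i=1}^n l_i^2 = |\lambda(\bm A)|_{\max}\, \|\bm x\|_2^2,
\]
where the second inequality uses $l_i^2 \ge 0$ and the definition of the spectral radius $|\lambda(\bm A)|_{\max} := \max_i |\lambda_i|$. This yields the claimed bound.

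There is essentially no obstacle here; the only subtlety compared with Lemma \ref{lem:quad_eq1} is that $\bm A$ need not be positive definite, so one cannot simply sandwich between $\lambda_{\min}$ and $\lambda_{\max}$. The triangle-inequality step handles this cleanly, and no additional assumptions on $\bm A$ beyond real symmetry are needed.
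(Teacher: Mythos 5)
Your proof is correct and follows essentially the same route as the paper's: both use the orthogonal eigendecomposition $\bm A = \bm Q\bm\Lambda\bm Q^\top$, the isometry $\|\bm Q^\top\bm x\|_2 = \|\bm x\|_2$, and the triangle inequality to bound $\bigl|\sum_i \lambda_i l_i^2\bigr|$ by $|\lambda(\bm A)|_{\max}\|\bm x\|_2^2$. No gaps.
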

\begin{proof}
For any real symmetric matrix, based on eigenvalue decomposition, $\bm A=\bm Q\bm \Lambda \bm Q^\top $, where $\bm \Lambda=\diag\{\lambda_1,\ldots, \lambda_n\}$ is a diagonal matrix of the eigenvalues of $\bm A$, and $\bm Q$ is the $n\times n$ orthogonal matrix as above.
Denote $\bm l:=\bm Q^\top \bm x$.
\[
|\bm x^\top \bm A\bm x|=|\bm x^\top \bm Q\bm \Lambda \bm Q^\top \bm x|=|\bm l^\top \bm \Lambda \bm l|=|\sum_{i=1}^n \lambda_il_i^2|\leq \sum_{i=1}^n |\lambda_i|l_i^2\leq |\lambda(\bm A)|_{\max}||\bm l||_2^2=|\lambda(\bm A)|_{\max}||\bm x||_2^2.
\]
Here $|\lambda(\bm A)|_{\max}=\max_{i=1,\ldots,n}|\lambda|_i$.
\end{proof}

\noindent{\bf Proof of Proposition \ref{prop:gap}}
\begin{proof}
Using Taylor expansion, we have
\begin{align*}
T(\bm x, \rho)&=T(\bm x,\rho_0)+\left.\frac{\partial T(\bm x,\rho)}{\partial \rho}\right\vert_{\rho=\rho_0}(\rho-\rho_0)+\frac{1}{2}\left.\frac{\partial^2 T(\bm x,\rho)}{\partial \rho^2}\right\vert_{\rho=\rho_0}(\rho-\rho_0)^2+O((\rho-\rho_0)^3).
\end{align*}
Apply expectation on both side of the equaiton with respet the priori $p(\rho)$, we have
\begin{align*}
\mathbb{E}\left[T(\bm x, \rho)\right]&=T(\bm x,\rho_0)+\left.\frac{\partial T(\bm x,\rho)}{\partial \rho}\right\vert_{\rho=\rho_0}\mathbb{E}\left[\rho-\rho_0\right]+\frac{1}{2}\left.\frac{\partial^2 T(\bm x,\rho)}{\partial \rho^2}\right\vert_{\rho=\rho_0}\mathbb{E}\left[(\rho-\rho_0)^2\right]+\mathbb{E}\left[O((\rho-\rho_0)^3)\right]\\
&=T(\bm x,\rho_0)+\frac{1}{2}\left.\frac{\partial^2 T(\bm x,\rho)}{\partial \rho^2}\right\vert_{\rho=\rho_0}\var(\rho)+\mathbb{E}\left[O((\rho-\rho_0)^3)\right].
\end{align*}
From the proof of Theorem \ref{thm:concave}, we have that
\[
\frac{\partial^2 T(\bm x, \rho)}{\partial \rho^2}=-\frac{\partial^2 T_2(\bm x,\rho)}{\partial \rho^2}.
\]
Thus we obtain the gap between $T(\bm x, \rho_0)$ and $\mathbb{E}\left[T(\bm x, \rho)\right]$ in \eqref{eq:gap}.
Also in proof of Theorem \ref{thm:concave},
\[
\frac{1}{2}\left.\frac{\partial^2 T_2(\bm x, \rho)}{\partial \rho^2}\right\vert_{\rho=\rho_0}=\bm s^\top \bm W\bm F\bm A^{-1}\bm F^\top\bm W \bm s,
\]
where
\begin{align*}
\bm A&=\bm F^\top (\bm D-\rho_0\bm W)\bm F,\\
\bm s&=\left[\bm I_n-\bm F\bm A^{-1}\bm F^\top (\bm D-\rho_0\bm W)\right]\bm x.
\end{align*}
From the definition of $\bm s$, we can see that
\begin{align*}
\bm s^\top (\bm D-\rho_0\bm W)\bm s &=\bm x^\top \left[(\bm D-\rho_0 \bm W)-(\bm D-\rho_0\bm W)\bm F\bm A^{-1}\bm F^\top (\bm D-\rho_0\bm W)\right]\bm x\\
&\leq \bm x^\top (\bm D-\rho_0\bm W)\bm x.
\end{align*}
From Lemma \ref{lem:quad_eq1}, since $\bm D-\rho_0\bm W$ is a real symmetric positive definite matrix as $\rho_0\in (0,1)$,
\[
\lambda_{\min}(\bm D-\rho_0\bm W) ||\bm s||_2^2\leq \lambda_{\max}(\bm D-\rho_0\bm W)||\bm x||_2^2=\lambda_{\max}(\bm D-\rho_0\bm W)n.
\]
On the other hand, $\bm x^\top (\bm D-\rho_0\bm W)\bm x\leq (1+\rho_0)m$.
Thus,
\[
||\bm s||_2^2\leq \frac{1}{\lambda_{\min}(\bm D-\rho_0\bm W)}\min\{n\lambda_{\max}(\bm D-\rho_0\bm W), (1+\rho_0)m\}.
\]
According to Theorem \ref{thm:t1}, $\bm x^\top \bm W \bm x/\sqrt{m}$ converges in distribution to the standard normal distribution.
Therefore, with probability of $100(1-\alpha)\%$, $\bm x^\top \bm W\bm x \geq -z_{\alpha} \sqrt{m}$, asymptotically.
Here $z_{\alpha}$ is the upper $\alpha$ quantile of the standard normal distribution, i.e., $z_{\alpha}=\Phi^{-1}(1-\alpha)$.
So we can obtain an asymptotic upper bound,
\[
\bm s^\top (\bm D-\rho_0\bm W)\bm s\leq \bm x^\top (\bm D-\rho_0\bm W)\bm x=\bm x^\top\bm D\bm x-\rho_0\bm x^\top \bm W\bm x=m-\rho_0\bm x^\top \bm W\bm x\leq m+z_{\alpha}\sqrt{m},
\]
which holds with probability of $100(1-\alpha)\%$.
Consequently, an asymptotic upper bound for $||s||_2^2$ is
\[||\bm s||_2^2\leq \frac{1}{\lambda_{\min}(\bm D-\rho_0\bm W)} (m+z_{\alpha}\sqrt{m})\]
with probability of $100(1-\alpha)\%$.

It is easy to see that the matrix
\[
\bm I_n-(\bm D-\rho_0\bm W)^{1/2}\bm F\bm A^{-1}\bm F^\top (\bm D-\rho_0\bm W)^{1/2}
\]
is a projection matrix, and thus
\begin{align*}
&\bm s^\top \bm W\bm F\bm A^{-1}\bm F^\top\bm W\bm s \\
= & \bm s^\top \bm W(\bm D-\rho_0\bm W)^{-1/2}(\bm D-\rho_0\bm W)^{1/2}\bm F\bm A^{-1}\bm F^\top (\bm D-\rho_0\bm W)^{-1/2}(\bm D-\rho_0\bm W)^{1/2}\bm W\bm s\\
\leq & \bm s^\top \bm W(\bm D-\rho_0\bm W)^{-1}\bm W\bm s \leq \lambda_{\min}^{-1}(\bm D-\rho_0\bm W)||\bm W\bm s||_2^2\\
\leq &\lambda_{\min}^{-1}(\bm D-\rho_0\bm W)||\bm W||_2^2||\bm s||_2^2 = \lambda_{\min}^{-1}(\bm D-\rho_0\bm W)|\lambda(\bm W)|_{\max}^2||\bm s||_2^2
\end{align*}
The first inequality is due to Lemma \ref{lem:quad_eq2}.
Here $|\lambda(\bm W)|_{\max}=||\bm W||_2$ is the spetrum radius of $\bm W$.
Combining the previous steps we obtain the upper bound of the gap in \eqref{eq:gap-upper}.
\end{proof}

\section*{S3. Proposition \ref{prop:cor} and Its Proof}

\begin{proposition}\label{prop:cor}
Let $x_1, \ldots, x_n$ of $\bm x$ are independent and identically distributed random variables from the
 discrete distribution with $\Pr(x_i=1)=\Pr(x_i=-1)=0.5$. 
For any two symmetric and non-zero $n\times n$ matrices $\bm A$ and $\bm B$, we have that
\begin{equation}\label{eq:quadcor}
\cor_{\bm x}(\bm x^\top \bm A \bm x, \bm x^\top \bm B \bm x)=\frac{\sum_{i < j} a_{ij}b_{ij}}{\sqrt{\sum_{i < j}a^2_{ij}}\sqrt{\sum_{i < j} b^2_{ij}}},
\end{equation}
where $a_{ij}$ and $b_{ij}$ are the $(i,j)$-th entries of matrices $\bm A$ and $\bm B$ respectively.
\end{proposition}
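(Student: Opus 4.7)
The plan is to exploit the fact that the $x_i$'s are Rademacher variables, so $x_i^2 = 1$ almost surely and $\mathbb{E}[x_i^{a_1}\cdots x_n^{a_n}] = 1$ if every exponent $a_k$ is even and $0$ otherwise. I would reduce the two quadratic forms to sums over distinct indices, compute their first two moments, and assemble the correlation.

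First, using symmetry of $\bm A$ and $x_i^2 = 1$, I would write
\[
\bm x^\top \bm A \bm x = \sum_{i=1}^n a_{ii} + 2\sum_{i<j} a_{ij} x_i x_j = \tr(\bm A) + 2\sum_{i<j} a_{ij} x_i x_j,
\]
and analogously for $\bm B$. Since $\mathbb{E}[x_i x_j] = 0$ for $i\ne j$, the means are $\mathbb{E}[\bm x^\top \bm A \bm x] = \tr(\bm A)$ and similarly $\tr(\bm B)$, so the centered versions are the off-diagonal double sums. The correlation therefore only depends on the off-diagonal entries, which already matches the form of the claimed expression.

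Next, I would compute second moments by invoking the key identity that for $i<j$ and $k<l$,
\[
\mathbb{E}[x_i x_j x_k x_l] = \begin{cases}1 & \text{if } (i,j)=(k,l),\\ 0 & \text{otherwise,}\end{cases}
\]
which follows from independence of the $x_i$'s together with $\mathbb{E}[x_i]=0$ and $\mathbb{E}[x_i^2]=1$ (any unpaired index contributes a factor $\mathbb{E}[x_i]=0$). From this I get immediately
\[
\var_{\bm x}(\bm x^\top \bm A \bm x) = 4\sum_{i<j} a_{ij}^2, \qquad \var_{\bm x}(\bm x^\top \bm B \bm x) = 4\sum_{i<j} b_{ij}^2,
\]
and
\[
\cov_{\bm x}(\bm x^\top \bm A \bm x,\bm x^\top \bm B \bm x) = 4\sum_{i<j} a_{ij} b_{ij}.
\]
Dividing gives the stated formula after the factor $4$ cancels.

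There is no real obstacle here, since everything reduces to the fourth-moment calculation for Rademacher variables; the only thing to be careful about is handling the symmetry convention (the factor of $2$ from pairing $(i,j)$ and $(j,i)$) consistently across $\bm A$ and $\bm B$ so that the factors of $4$ match in numerator and denominator. The non-zero assumption on $\bm A$ and $\bm B$ is needed only to ensure the denominators $\sum_{i<j} a_{ij}^2$ and $\sum_{i<j} b_{ij}^2$ are strictly positive; if all off-diagonal entries of $\bm A$ or $\bm B$ happened to vanish, the corresponding quadratic form would be deterministic and the correlation would be undefined, so implicitly we also need at least one nonzero off-diagonal entry in each matrix.
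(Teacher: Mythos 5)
Your proof is correct, and it reaches the same key identity $\cov_{\bm x}(\bm x^\top \bm A\bm x, \bm x^\top \bm B\bm x)=4\sum_{i<j}a_{ij}b_{ij}$ as the paper, but by a genuinely more elementary route. The paper works at the matrix level: it writes $\bm x^\top\bm A\bm x\,\bm x^\top\bm B\bm x=\tr\bigl[(\bm A\otimes\bm B)(\bm x\otimes\bm x)(\bm x^\top\otimes\bm x^\top)\bigr]$, computes the block structure of $\mathbb{E}\bigl[(\bm x\otimes\bm x)(\bm x^\top\otimes\bm x^\top)\bigr]$ (identity diagonal blocks plus $\bm e_i\bm e_j^\top+\bm e_j\bm e_i^\top$ off-diagonal blocks), and extracts the covariance from trace identities for Kronecker products. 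You instead use $x_i^2=1$ and symmetry to collapse each quadratic form to $\tr(\bm A)+2\sum_{i<j}a_{ij}x_ix_j$ and then apply the Rademacher fourth-moment rule $\mathbb{E}[x_ix_jx_kx_l]=\mathbf{1}\{(i,j)=(k,l)\}$ for $i<j$, $k<l$. The two computations encode the same moment structure, but yours is shorter, avoids the Kronecker machinery entirely, and makes the factor-of-$4$ bookkeeping transparent. You also correctly observe a point the paper's hypothesis glosses over: ``non-zero'' is not quite sufficient, since a matrix with all off-diagonal entries zero (e.g.\ a non-zero diagonal matrix) yields a deterministic quadratic form and an undefined correlation; what is actually needed is at least one non-zero off-diagonal entry in each of $\bm A$ and $\bm B$. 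That refinement is a small but genuine improvement on the statement as written.
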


Consider two $n\times n$ symmetric matrices $\bm A$ and $\bm B$. For random designs, we have that 
$\mathbb{E}(x_i)=0$, $\var(x_i)=1$, and $\cov(x_i,x_j)=0$ for $i\neq j$. Therefore, $\cov(\bm x)=\bm I_n$ and 
\begin{align*}
\cov(\bm x^\top \bm A \bm x, \bm x^\top \bm B \bm x )&=\mathbb{E}(\bm x^\top \bm A \bm x\bm x^\top \bm B \bm x)-
\mathbb{E}(\bm x^\top \bm A \bm x)\mathbb{E}(\bm x^\top \bm B \bm x)\\
&=\mathbb{E}\left(\bm x^\top \bm A \bm x\bm x^\top \bm B \bm x\right)-\tr(\bm A)\tr(\bm B)
\end{align*}
Note that 
\[
\bm x^\top \bm A \bm x\bm x^\top \bm B \bm x=(\bm x^\top \bm A \bm x)\otimes(\bm x^\top \bm B \bm x)=(\bm x^\top \otimes\bm x^\top)(\bm A\otimes \bm B) (\bm x \otimes \bm x).
\]
Then 
\begin{align*}
\bm x^\top \bm A \bm x\bm x^\top \bm B \bm x &= \tr(\bm x^\top \bm A \bm x\bm x^\top \bm B \bm x)=\tr((\bm x^\top \otimes\bm x^\top) (\bm A\otimes \bm B) (\bm x \otimes \bm x))\\
&=\tr((\bm A\otimes \bm B) (\bm x \otimes \bm x)(\bm x^\top \otimes\bm x^\top)),
\end{align*}
and thus
\[
\mathbb{E}(\bm x^\top \bm A \bm x\bm x^\top \bm B \bm x)=\mathbb{E}(\tr(\bm x^\top \bm A \bm x\bm x^\top \bm B \bm x))=\tr((\bm A\otimes \bm B)\mathbb{E}( (\bm x \otimes \bm x)(\bm x^\top \otimes\bm x^\top)))
\]
We need to derive $\mathbb{E}( (\bm x \otimes \bm x)(\bm x^\top \otimes\bm x^\top))$. 
Note that $(\bm x \otimes \bm x)(\bm x^\top \otimes\bm x^\top)=(\bm x\bm x^\top)\otimes (\bm x \bm x^\top)$
is an $n\times n$ block matrix, and the $i,j$-th block is $x_ix_j \bm x\bm x^\top$.
The diagonal blocks are $\mathbb{E}(x_i^2\bm x\bm x^\top)=\bm I_n$ ($\mathbb{E}(x_i^4)=1)$. 
If $i\neq j$,  $\mathbb{E}(x_ix_j \bm x\bm x^\top)=\bm e_i \bm e^\top_j+\bm e_j\bm e_i^\top$, where $\bm e_i$ is the element vector with $i$-th entry equal to 1 others 0 and $\bm e_i \bm e^\top_j+\bm e_j \bm e_i^\top$ is a matrix with $(i,j)$th and $(j,i)$th entries equal to 1 and the rest entries 0. 
Therefore, the resulting $n\times n$ block matrix should have diagonal blocks be an $n\times n$ identity matrix, and the $(i,j)$-th off-diagonal block be $\bm e_i \bm e^\top_j+\bm e_j\bm e_i^\top$. 
So we can decompose the block matrix to be 
\begin{align*}
&\mathbb{E}((\bm x \otimes \bm x)(\bm x^\top \otimes\bm x^\top))=\bm I_n\otimes \bm I_n+\sum_{i \neq j} (\bm e_i\bm e^\top_j)\otimes (\bm e_i\bm e^\top_j+\bm e_j\bm e^\top_i)\\
=& \bm I_n\otimes \bm I_n+\sum_{i \neq j} (\bm e_i\bm e^\top_j)\otimes (\bm e_i\bm e^\top_j)+\sum_{i \neq j} (\bm e_i\bm e^\top_j)\otimes(\bm e_j\bm e^\top_i)
\end{align*}

Then 
\begin{align*}
&\tr\left[(\bm A\otimes \bm B)\mathbb{E}[(\bm x \otimes \bm x)(\bm x^\top \otimes\bm x^\top)]\right]\\
=&\tr\left[(\bm A\otimes \bm B)(\bm I_n\otimes \bm I_n)\right]+\tr\left[\sum_{i \neq j} (\bm A\otimes\bm B)[(\bm e_i\bm e^\top_j)\otimes (\bm e_i\bm e^\top_j)]\right]+\tr\left[\sum_{i \neq j} (\bm A\otimes\bm B)[(\bm e_i\bm e^\top_j)\otimes (\bm e_j\bm e^\top_i)]
\right]\\
=& \tr\left[\bm A\otimes \bm B\right]+\sum_{i \neq j} \tr \left[(\bm A\otimes \bm B)[(\bm e_i\bm e_j^\top)\otimes (\bm e_i\bm e_j^\top)]\right]+\sum_{i \neq j} \tr \left[(\bm A\otimes \bm B)[(\bm e_i\bm e_j^\top)\otimes (\bm e_j\bm e_i^\top)]\right]\\
=&\tr (\bm A)\tr (\bm B)+\sum_{i \neq j} \tr \left[ (\bm A \bm e_i\bm e^\top_j)\otimes (\bm B \bm e_i\bm e^\top_j)\right]+\sum_{i \neq j} \tr \left[ (\bm A \bm e_i\bm e^\top_j)\otimes (\bm B \bm e_j\bm e^\top_i)\right]\\
=& \tr (\bm A)\tr (\bm B)+2\sum_{i \neq j} \tr \left[ (\bm A \bm e_i\bm e^\top_j) \right]\tr\left[(\bm B \bm e_i\bm e^\top_j)\right]\\
=&\tr (\bm A)\tr (\bm B)+4\sum_{i < j} \bm A_{ij}\bm B_{ij},
\end{align*}
where $\bm A_{ij}$ is the $ij$-th entry of matrix $\bm A$.
Then
\[
\cov(\bm x^\top \bm A \bm x, \bm x^\top \bm B \bm x )=4\sum_{i < j} \bm A_{ij}\bm B_{ij}
\]
Accordingly, 
\[
\cor(\bm x^\top \bm A \bm x, \bm x^\top \bm B \bm x)=\frac{\sum_{i < j} \bm A_{ij}\bm B_{ij}}{\sqrt{\sum_{i < j} \bm A^2_{ij}}\sqrt{\sum_{i < j} \bm B^2_{ij}}}
\]

\section*{S4. Proof of Theorem \ref{thm:t1}}
We first provide a useful Lemma.
\begin{lem}\label{lem}
Let $X$ and $Y$ be two random variables taking values from $\{-1, 1\}$.
If $\mathrm{cov}(X, Y)=0$, then $X$ and $Y$ are independent.

\end{lem}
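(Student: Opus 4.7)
The plan is to exploit the fact that both $X$ and $Y$ take only two values, so the joint distribution is fully parametrized by very few numbers; the zero-covariance condition then forces factorization directly. Concretely, I would set $p=\Pr(X=1)$, $q=\Pr(Y=1)$, and $p_{ij}=\Pr(X=i, Y=j)$ for $i,j\in\{-1,1\}$. The marginal constraints give $p_{11}+p_{1,-1}=p$, $p_{11}+p_{-1,1}=q$, and $\sum_{i,j}p_{ij}=1$, so that the three probabilities $p_{1,-1}$, $p_{-1,1}$, $p_{-1,-1}$ can each be written in terms of $p$, $q$, and the single free parameter $p_{11}$.

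Next I would compute $\mathbb{E}(X)=2p-1$, $\mathbb{E}(Y)=2q-1$, and plug the expressions for the $p_{ij}$ into
\[
\mathbb{E}(XY)=p_{11}-p_{1,-1}-p_{-1,1}+p_{-1,-1}=4p_{11}-2p-2q+1.
\]
Setting $\mathrm{cov}(X,Y)=\mathbb{E}(XY)-\mathbb{E}(X)\mathbb{E}(Y)=0$ and using $(2p-1)(2q-1)=4pq-2p-2q+1$ yields $4p_{11}=4pq$, hence $p_{11}=pq$. The remaining three joint probabilities then follow: $p_{1,-1}=p(1-q)$, $p_{-1,1}=(1-p)q$, and $p_{-1,-1}=(1-p)(1-q)$, which is exactly the factorization $\Pr(X=x,Y=y)=\Pr(X=x)\Pr(Y=y)$ for all four combinations, so $X$ and $Y$ are independent.

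There is no real obstacle here; the argument is essentially one line of algebra once the parametrization is written down. The only substantive observation is the reason the proof works: a $\{-1,1\}$-valued random variable has its entire distribution determined by its mean, so a bivariate distribution supported on $\{-1,1\}^2$ is determined by three numbers ($\mathbb{E}(X)$, $\mathbb{E}(Y)$, $\mathbb{E}(XY)$), and under zero covariance the mixed moment factors, which is equivalent to independence in this binary setting. This is a special feature of two-point supports and would fail for larger supports, which is worth flagging briefly in the write-up since the lemma is invoked in the CLT proof of Theorem \ref{thm:t1} precisely because the design variables are $\pm1$-valued.
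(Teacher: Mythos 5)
Your proof is correct and follows essentially the same route as the paper's: both arguments reduce the claim to verifying that the four joint probabilities on a two-point support factor once the mixed moment $\mathbb{E}(XY)$ is pinned down by the zero-covariance condition. The only cosmetic difference is that the paper first passes to Bernoulli variables $U=(X+1)/2$, $V=(Y+1)/2$ and uses $\mathrm{cov}(X,Y)=4\,\mathrm{cov}(U,V)$, whereas you work directly with the $\{-1,1\}$ parametrization; the substance is identical.
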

\begin{proof}
Let $U$ and $V$ be two Bernoulli random variables. We first show that
if $\mathrm{cov}(U, V)=0$, then $U$ and $V$ are independent.

Notice that
\[
\Pr(\{U=1\}\mathrm{ and }\{V=1\})=\Pr(UV=1)=\mathbb{E}(UV)
\]
\[
\mathbb{E}(U)=\Pr(U=1)
\]
and
\[
\mathbb{E}(V)=\Pr(V=1).
\]
If $\mathrm{cov}(U, V)=0$,
\[
\Pr(\{U=1\}\mathrm{ and }\{V=1\})-\Pr(U=1)\Pr(V=1)
=\mathbb{E}(UV)-\mathbb{E}(U)\mathbb{E}(V)=0.
\]
Similarly, we can show that
\[
\Pr(\{U=0\}\mathrm{ and }\{V=1\})-\Pr(U=0)\Pr(V=1)=0,
\]
\[
\Pr(\{U=0\}\mathrm{ and }\{V=0\})-\Pr(U=0)\Pr(V=0)=0,
\]
and
\[
\Pr(\{U=1\}\mathrm{ and }\{V=0\})-\Pr(U=1)\Pr(V=0)=0,
\]
which demonstrate that $U$ and $V$ are independent.

For $X$ and $Y$, we have that $X=2U-1$ and $Y=2V-1$. The independence of $U$ and $V$ indicates the independence of $X$ and $Y$.
Also,
\[
\mathrm{cov}(X, Y)=4\mathrm{cov}(U, V).
\]
Thus, the conclusion holds.
\end{proof}

\begin{proof}
	Recall that $w_{ii}=0$ for $i=1,\ldots, n$. Therefore, we only need to consider the terms $w_{ij}x_ix_j$ with $i\neq j$.
Notice that
\[
\mathrm{cov}(x_ix_j, x_{i'}x_{j'})=\mathbb{E}(x_ix_jx_{i'}x_{j'})-\mathbb{E}(x_ix_j)\mathbb{E}(x_{i'}x_{j'})=0
\]
for $i\neq i'$ and $j\neq j'$. Also,
\[
\mathrm{cov}(x_ix_j, x_{i}x_{j'})=\mathbb{E}(x^2_ix_jx_{j'})-\mathbb{E}(x_ix_j)\mathbb{E}(x_{i}x_{j'})=0
\]
for $j\neq j'$. According to Lemma \ref{lem}, we have that $x_ix_j$ and $x_ix_{j'}$ are independent, and
 $x_ix_j$ and $x_{i'}x_{j'}$ are independent.
 Thus, $w_{ij}x_ix_j$'s with $w_{ij}\neq 0$ are i.i.d random variables with mean
 \[
 \mathbb{E}(w_{ij}x_ix_j)= \mathbb{E}(x_i) \mathbb{E}(x_j)=0,
 \]
 and variance
  \[
 \var(w_{ij}x_ix_j)= \mathbb{E}(x^2_i x^2_j)-(\mathbb{E}(x_i x_j))^2=1.
 \]
According to
the central limit theorem, the conclusion holds.
\end{proof}

\section*{S5. Proof of Proposition \ref{prop:prec}}

\begin{proof}

Notice that
\[
\mathbb{E}\left(\bm x^\top \bm K \bm x\right)=\tr\left[\mathbb{E}\left(\bm x^\top \bm K \bm x\right)\right]
=\mathbb{E}\left[\tr(\bm x^\top \bm K \bm x)\right]=\mathbb{E}\left[\tr(\bm K \bm x \bm x^\top )\right]=
\tr\left[\bm K \mathbb{E}(\bm x \bm x^\top)\right].
\]
For completely random design, under the same assumption as in Theorem \ref{thm:t1}, we have that
\[
\mathbb{E}(x_i x_j)=\mathbb{E}(x_i)\mathbb{E}(x_j)=0~~\mathrm{for}~~i\neq j
\]
and $\mathbb{E}(x^2_i)=1$ for $i=1, \ldots, n$.
Thus, $\mathbb{E}(\bm x \bm x^\top)=\bm I_n$.

Now we consider the case where $\bm x$ is a random balanced design. 
If $n$ is even, we have that
\[
\mathbb{E}\left(x_i\sum^n_{j=1} x_j\right)=0
\]
 since the balanced constraint gives $\sum^n_{j=1} x_j=0$ directly.
If $n$ is odd, $n=2h+1$ with $h$ be a positive integer. 
Due to the balance constraint, $\sum_{i=1}^n x_i=1$ or $-1$. 
We have that
\begin{align*}
&\mathbb{E}\left(x_i\sum^n_{j=1} x_j\right)=\Pr\left(\sum^n_{i=1} x_i=1\right)\mathbb{E}\left(x_i\sum^n_{j=1} x_j\bigg |\sum^n_{i=1} x_i=1\right)+\Pr\left(\sum^n_{i=1} x_i=-1\right)\mathbb{E}\left(x_i\sum^n_{j=1} x_j\bigg |\sum^n_{i=1} x_i=-1\right)\\
&=\frac{1}{2} \mathbb{E}\left(x_i\bigg |\sum^n_{j=1} x_j=1\right)-\frac{1}{2} \mathbb{E}\left(x_i\bigg |\sum^n_{j=1} x_j=-1\right).
\end{align*}
Note that 
\begin{align*}
&\mathbb{E}\left(x_i\bigg|\sum^n_{i=1} x_i=1\right)=\Pr\left(x_i=1\bigg |\sum^n_{i=1} x_i=1\right)-\Pr\left(x_i=-1\bigg |\sum^n_{i=1} x_i=1\right)=\frac{h+1}{2h+1}-\frac{h}{2h+1}=\frac{1}{n},\\
&\mathbb{E}\left(x_i\bigg|\sum^n_{j=1} x_j=-1\right)=\Pr\left(x_i=1\bigg |\sum^n_{j=1} x_j=-1\right)-\Pr\left(x_i=-1\bigg |\sum^n_{j=1} x_j=-1\right)=\frac{h}{2h+1}-\frac{h+1}{2h+1}=-\frac{1}{n}.
\end{align*}
Thus, $\mathbb E\left(x_i\sum^n_{j=1} x_j\right)=1/n$.

Therefore,
\[
\mathbb E\left(x_1\sum^n_{j=1} x_j\right)=1+(n-1)\mathbb E(x_1x_2)\]
which gives that 
\[
\mathbb E(x_1x_2)=\begin{cases}
-\frac{1}{n-1} ~~\mathrm{if}~~n~\mathrm{is~ even}\\
-\frac{1}{n}~~if~~\mathrm{if}~~n~\mathrm{is~ odd}
\end{cases}.
\]
This conclusion holds for $\mathbb{E}(x_ix_j)$ with any $i\neq j$. 
\end{proof}


\end{document}